\documentclass[12pt]{amsart}
\usepackage{amsmath,amsfonts,amssymb,amsthm,a4wide}
\usepackage{pst-node}

\DeclareMathOperator{\val}{val}
\DeclareMathOperator{\fac}{Fac}
\theoremstyle{plain}
\newtheorem{theorem}{Theorem}
\newtheorem{lemma}[theorem]{Lemma}

\newtheorem{prop}[theorem]{Proposition}
\newtheorem{properties}[theorem]{Properties}
\theoremstyle{definition}
\newtheorem{definition}[theorem]{Definition}
\newtheorem{question}[theorem]{Question}
\newtheorem{remark}[theorem]{Remark}

\newtheorem{example}[theorem]{Example}

\title{Relations on words}
\author{Michel Rigo}
\address{University of Liege, Dept. of Math., All\'ee de la d\'ecouverte 12 (B37), B-4000 Li\`ege, Belgium.}
\email{M.Rigo@ulg.ac.be}
\thanks{dedicated to the memory of my grandmother Suzanne Wiame 1932--2016.}
\date{\today}

\begin{document}

\maketitle

\begin{abstract}
    In the first part of this survey, we present classical notions arising in combinatorics on words: growth function of a language, complexity function of an infinite word, pattern avoidance, periodicity and uniform recurrence. Our presentation tries to set up a unified framework with respect to a given binary relation. 

In the second part, we mainly focus on abelian equivalence, $k$-abelian equivalence, combinatorial coefficients and associated relations, Parikh matrices and $M$-equivalence. In particular, some new refinements of abelian equivalence are introduced.
\end{abstract}

\section{Introduction}
This paper follows and complements the talk I gave during the conference on {\em Automatic Sequences, Number Theory, and Aperiodic Order} held in Delft in October 2015. The aim is to survey various concepts arising in combinatorics on words and present them in a unified and general framework. In Section~\ref{sec:gen}, {\em relatively to a given binary relation over $A^*$}, we define the growth function of a language, the complexity function of an infinite word and the notions of avoidable patterns, periodicity and uniform recurrence. These notions are usually first introduced for the restrictive case of equality of factors, e.g., the complexity function counts the number of factors of length $n$ occurring in a given infinite word but we could count them up to rearrangement of the letters. 
In the second part of the paper, we review classical binary relations on words where these concepts may be applied. In Section~\ref{sec:a1}, we consider abelian equivalence, then its extension to $k$-abelian equivalence is presented in Section~\ref{sec:a2}. We pursue in Section~\ref{sec:a4} with binomial coefficients of words and various equivalence relations that can be associated with. In Section~\ref{sec:a3}, we present Parikh matrices and related relations. In the last section, we briefly present partial words and their generalizations to similarity relations.   
The bibliography is not exhaustive (it is limited to 100 entries) but we hope that it could provide relevant entry points to the existing literature. We limit ourselves to the unidimensional case. Indeed, many of the presented concepts have counterparts in a multidimensional setting.

\section{Basics}
We give some basic definitions about words. For general references, see \cite{CANT,Lot,Rigo}. Let $A$ be a finite alphabet, i.e., a finite set of elements called letters. A {\em finite word} $w$ over $A$ is a finite sequence of elements in $A$. So it is a map $w:\{1,\ldots,n\}\to A$ where $n\in\mathbb{N}$ is the length of the word $w$. In particular, the empty sequence is called the {\em empty word} and is denoted by $\varepsilon$. Its length is $0$. Note that the indexing of finite words begins at position $1$. The set of finite words over $A$ is denoted by $A^*$. Endowed with the concatenation of words as product operation, $A^*$ is a monoid with $\varepsilon$ as neutral element. We write $|w|$ for the length of the word $w$ and $|w|_a$ for the number of occurrences of the letter $a$ in $w$. We directly have $|w|=\sum_{a\in A}|w|_a$. 
Let $n$ be an integer. We write $A^n$ to denote the set of words of length $n$ over $A$. Let $A,B$ be two finite alphabets. A map $f:A^*\to B^*$ is a {\em morphism} (of monoids) if $f(uv)=f(u)f(v)$ for all $u,v\in A^*$ and, in particular, we have $f(\varepsilon)=\varepsilon$. A morphism $f$ is {\em non-erasing} if $f(u)\neq\varepsilon$ for all non-empty words $u\in A^*$. A morphism is characterized by the images of the letters of its domain. If the images of the letters all have length $1$, the morphism is called a {\em coding} (i.e., a letter-to-letter morphism).   

An {\em infinite word} over $A$ is a map $\mathbf{w}:\mathbb{N}\to A$. Note that the indexing of infinite words begins at position $0$ (which is quite convenient when dealing, for instance, with automatic sequences). A {\em factor} $u=u_1\cdots u_m$ of length $m$ occurring in a finite word $v=v_1\cdots v_n$ of length $n$ is a block of consecutive letters occurring in it, i.e., $m\le n$ and there exists $r\le n-m$ such that $u_j=v_{r+j}$ for $j\in\{1,\ldots,m\}$. In that case, we say that $u$ occurs in $v$ at position $j$. A {\em factor} of an infinite word $\mathbf{w}$ is a factor occurring in a finite prefix of $\mathbf{w}$. The set of factors (resp. the set of factors of length $n$) occurring in $\mathbf{w}$ is denoted by $\fac_\mathbf{w}$ (resp. $\fac_\mathbf{w}(n):=\fac_\mathbf{w}\cap A^n$). We denote similarly the set of factors of a finite word.

A morphism $f:A^*\to A^*$ is {\em prolongable} on the letter $a\in A$ if there exists a finite word $u$ such that $f(a)=au$ and if $\lim_{n\to +\infty}|f^n(a)|=+\infty$. In that case, the sequence $(f^n(a))_{n\ge 0}$ converges to an infinite word denoted by $f^\omega(a)$ that is said to be a {\em pure morphic word}. The image under a coding of a pure morphic word is said to be {\em morphic}. Let $k\ge 2$ be an integer. If the morphism $f:A^*\to A^*$ verifies $|f(a)|=k$ for all $a\in A$, then every infinite word of the form $g(f^\omega(a))$, where $g$ is a coding, is said to be $k$-automatic \cite{AlloucheShallit,Cobham}.

\section{General framework}\label{sec:gen}

Let $\sim$ be a reflexive and symmetric binary relation over $A^*$.
In many cases discussed in this survey, $\sim$ will be an equivalence relation (or even a congruence with respect to the concatenation of words). A trivial but useful example is given by the equality relation, each equivalence class is restricted to a singleton.

\begin{example}\label{exa:hamming}
    Let $k\ge 1$. Let $u,v$ be two words. We write $u\sim_{\mathcal{H},\le k}v$, if $|u|=|v|$ and the Hamming distance between $u$ and $v$ is at most $k$, i.e., 
$$d_{\mathcal{H}}(u,v):=\sum_{i=1}^{|u|} (1-\delta_{u_i,v_i})\le k$$
where $\delta_{a,b}=1$, if $a=b$; and $0$, otherwise. This relation is reflexive and symmetric but is not an equivalence relation. We have $abba\sim_{\mathcal{H},\le 1} abaa$, $abaa\sim_{\mathcal{H},\le 1} aaaa$ but $abba\not\sim_{\mathcal{H},\le 1} aaaa$.
\end{example}

A {\em language} over $A$ is a subset of $A^*$ (we only consider languages of finite words). The concatenation of words is naturally extended to the concatenation of languages: if $L,M$ are languages, $LM=\{uv\mid u\in L, v\in M\}$. Hence, the set $2^{A^*}$ of languages over $A$ equipped with concatenation is a monoid with $\{\varepsilon\}$ as neutral element.
\begin{definition}[growth function]
    Let $\sim$ be an equivalence relation over $A^*$ and $L\subset A^*$ be a language. We may consider the quotient $A^*/\!\sim$ and therefore the {\em growth function} of $L$ {\em with respect to} $\sim$ is defined as
$$\mathsf{g}_{\sim,L}:\mathbb{N}\to\mathbb{N},\ n\mapsto \#\bigl((L\cap A^n)/\!\sim\bigr).$$
If $\sim$ is the equality relation, $\mathsf{g}_{=,L}$ simply counts the number of words of length $n$ occurring in $L$. If $L=A^*$, then $\mathsf{g}_{\sim,A^*}(n)$ counts the number of equivalence classes of $\sim$ partitioning $A^n$.
\end{definition}

\begin{question}
Given an equivalence relation $\sim$ over $A^*$.
One can be interested in questions such as the following ones.
\begin{itemize}
  \item[Q.1.1] Compute or estimate the growth rate of $\mathsf{g}_{\sim,A^*}(n)$. 
  \item[Q.1.2] Given a specific language $L$, compute or estimate the growth order of $\mathsf{g}_{\sim,L}(n)$. Trivial bounds, for all $n\ge 0$, are given by $$1\le\mathsf{g}_{\sim,A^*}(n)\le \# (L\cap A^n)\le (\# A)^n.$$  
  \item[Q.1.3] For a class $\mathcal{F}$ of languages (e.g., the set of regular languages, the set of algebraic languages, the set of factors occurring in Sturmian words, etc.), does $\mathsf{g}_{\sim,L}$ have special properties for all $L\in\mathcal{F}$? Can it provide a characterization of $\mathcal{F}$?
\end{itemize}
\end{question}

\begin{example}
    If $L$ is a regular language (i.e., accepted by a finite automaton), then $(\mathsf{g}_{=,L}(n))_{n\ge 0}$ satisfies a linear recurrence equation with integer coefficients. This is a well-known consequence of Cayley--Hamilton theorem applied to the adjacency matrix of an automaton whose $n$th power counts walks of length $n$ between every pair of states.
\end{example}

As a special case of the previous definition, we can consider the {\em language of an infinite word}, i.e., the set of factors occurring in it.
\begin{definition}[complexity function]\label{def:complexity}
    Let $\sim$ be an equivalence relation over $A^*$. Since the quotient $A^*/\!\sim$ is well-defined, 
we thus define the {\em complexity function} of an infinite word $\mathbf{w}$ {\em with respect to} $\sim$ as
$$\mathsf{p}_{\sim,\mathbf{w}}:\mathbb{N}\to\mathbb{N},\ n\mapsto \#\bigl(\fac_\mathbf{w}(n)/\!\sim\bigr).$$
If $\sim$ is the equality relation, then $\mathsf{p}_{=,\mathbf{w}}$ is the usual {\em factor complexity} counting the number of factors of length $n$ occurring in $\mathbf{w}$ \cite{ELR}. The latter measure also leads to defining the {\em topological entropy} of $\mathbf{w}$. For a comprehensive presentation, see Cassaigne and Nicolas'~chapter \cite[Chap.~4]{CANT}. For instance, $\mathsf{p}_{=,\mathbf{w}}$ is in $\mathcal{O}(n)$ for every automatic sequence $\mathbf{w}$ \cite{Cobham,AlloucheShallit}. For a pure morphic word $\mathbf{w}$, a theorem of Pansiot \cite{Pan1,Pan2} shows that the growth order of $\mathsf{p}_{=,\mathbf{w}}$ can only take five forms $1,n,n\log\log n, n\log n,n^2$. See the survey \cite{Allouche93}.
\end{definition}

\begin{question}
Given an equivalence relation $\sim$ over $A^*$. 
One can be interested in questions such as the following ones.
\begin{itemize}
  \item[Q.2.1] Given a specific infinite word $\mathbf{w}$, compute or estimate the growth order of $\mathsf{p}_{\sim,\mathbf{w}}(n)$.
  \item[Q.2.2] For a class $\mathcal{F}$ of words (e.g., the set of Sturmian words \cite{Lot2}, the set of Arnoux--Rauzy words \cite{Arnoux}, the set of (pure) morphic words, automatic words, etc.), does $\mathsf{p}_{\sim,\mathbf{w}}$ have special properties for all $\mathbf{w}\in\mathcal{F}$? Can it provide a characterization of $\mathcal{F}$?
  \item[Q.2.3] A special case of the previous question is to study the set of words $\mathbf{w}$ such as $(\mathsf{p}_{\sim,\mathbf{w}}(n))_{n\ge 0}$ is bounded. Is there a Morse--Hedlund type result relating boundedness of the sequence $(\mathsf{p}_{\sim,\mathbf{w}}(n))_{n\ge 0}$ to the (ultimate) periodicity of the word $\mathbf{w}$ (see Theorem~\ref{the:mh} and Definition~\ref{def:periodicity} for generalizations of the concept of periodicity).
  \item[Q.2.4] Does $(\mathsf{p}_{\sim,\mathbf{w}}(n))_{n\ge 0}$ have a geometrical or a dynamical interpretation if $\mathbf{w}$ is derived from a dynamical system such as a coding of rotation? One can also be interested in arithmetical or algebraic interpretations when $\mathbf{w}$ is the expansion of a real number in a specific numeration system.
\end{itemize}
\end{question}

\begin{theorem}[Morse--Hedlund \cite{MH38}]\label{the:mh} An infinite word $\mathbf{w}$ is ultimately periodic, i.e., $\mathbf{w}=uvvv\cdots$ for some finite words $u,v$, if and only if the sequence $(\mathsf{p}_{=,\mathbf{w}}(n))_{n\ge 0}$ is bounded (by a constant). Otherwise stated, either $\mathbf{w}$ is ultimately periodic, or $\mathsf{p}_{=,\mathbf{w}}$ is increasing.
\end{theorem}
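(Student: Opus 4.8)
The plan is to prove the two implications separately; the reformulation at the end of the statement will drop out of the argument for the converse. For the easy direction, suppose $\mathbf{w}=uvvv\cdots$ with $|u|=m$ and $|v|=p\ge 1$, so that $\mathbf{w}_i=\mathbf{w}_{i+p}$ for all $i\ge m$. A factor of length $n$ occurring in $\mathbf{w}$ starts at some position $i\ge 0$; if $i<m$ there are at most $m$ possibilities, while if $i\ge m$ the periodicity of the tail shows that the factor starting at $i$ equals the one starting at $i+p$, so it depends only on $i\bmod p$, giving at most $p$ further possibilities. Hence $\mathsf{p}_{=,\mathbf{w}}(n)\le m+p$ for every $n$, so the sequence is bounded.

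For the converse I would first record monotonicity: the map $\fac_\mathbf{w}(n+1)\to\fac_\mathbf{w}(n)$ sending a word to its length-$n$ prefix is well defined and surjective --- a factor of length $n$ occurs at some position $i$, and since $\mathbf{w}$ is infinite the letter $\mathbf{w}_{i+n}$ extends it on the right --- so $(\mathsf{p}_{=,\mathbf{w}}(n))_{n\ge0}$ is non-decreasing. The crux is the dichotomy: if $\mathsf{p}_{=,\mathbf{w}}(n)=\mathsf{p}_{=,\mathbf{w}}(n+1)$ for some $n$, then this map is a bijection, so each factor $x$ of length $n$ admits exactly one right extension $x\phi(x)\in\fac_\mathbf{w}(n+1)$; consequently, wherever $x$ occurs in $\mathbf{w}$ it is immediately followed by $\phi(x)$. (One checks in the same way that the equality then persists, so $\mathsf{p}_{=,\mathbf{w}}$ is actually constant from $n$ on, though this is not needed below.)

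Now assume $(\mathsf{p}_{=,\mathbf{w}}(n))_{n\ge0}$ is bounded. Being non-decreasing and integer valued it is eventually constant, so $\mathsf{p}_{=,\mathbf{w}}(n)=\mathsf{p}_{=,\mathbf{w}}(n+1)$ for some $n$, and $\phi$ is defined on $\fac_\mathbf{w}(n)$. Consider the sliding windows $w_i:=\mathbf{w}_i\cdots\mathbf{w}_{i+n-1}$ for $i\ge0$: each $w_{i+1}$ is obtained from $w_i$ by deleting its first letter and appending $\phi(w_i)$, hence $w_{i+1}$ is a function of $w_i$ alone. This is a deterministic iteration on the finite set $\fac_\mathbf{w}(n)$, so by pigeonhole $(w_i)_{i\ge0}$ is eventually periodic: there are $i_0\ge0$ and $q\ge1$ with $w_{i_0+q}=w_{i_0}$, and determinism forces $w_{i+q}=w_i$ for all $i\ge i_0$; reading first letters, $\mathbf{w}_{i+q}=\mathbf{w}_i$ for all $i\ge i_0$, i.e. $\mathbf{w}$ is ultimately periodic. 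Reading the dichotomy contrapositively gives the last sentence of the theorem: if $\mathbf{w}$ is not ultimately periodic then $\mathsf{p}_{=,\mathbf{w}}(n)\ne\mathsf{p}_{=,\mathbf{w}}(n+1)$, hence $\mathsf{p}_{=,\mathbf{w}}(n)<\mathsf{p}_{=,\mathbf{w}}(n+1)$, for all $n$.

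I expect the only delicate point to be the dichotomy together with the construction of $\phi$: one must see that the coincidence $\mathsf{p}_{=,\mathbf{w}}(n)=\mathsf{p}_{=,\mathbf{w}}(n+1)$ forces a \emph{uniform} right-determinism of \emph{all} length-$n$ factors, not just of a single one, so that the window recursion $w_i\mapsto w_{i+1}$ is genuinely well defined; granting that, the conclusion is a plain finite-state pigeonhole argument and the direct implication is mere counting.
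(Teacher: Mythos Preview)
Your proof is correct and is precisely the classical argument. The paper does not supply its own proof of this theorem; it merely refers the reader to \cite[Section~4.3]{CANT} and \cite[Thm.~10.2.6]{AlloucheShallit}, and what you have written is essentially the standard proof found in those references: monotonicity via the prefix-taking surjection, the observation that a stall $\mathsf{p}_{=,\mathbf{w}}(n)=\mathsf{p}_{=,\mathbf{w}}(n+1)$ forces every length-$n$ factor to have a unique right extension, and then the pigeonhole/finite-state argument on the sliding window $w_i\mapsto w_{i+1}$.
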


For a proof, for instance, see \cite[Section 4.3]{CANT} or \cite[Thm.~10.2.6]{AlloucheShallit}.

\begin{example}
    Sturmian words have been extensively studied \cite{Lot2,SS1} and several characterizations do exist. They can be defined
    as codings of particular rotations on the normalized interval $[0,1)$ with irrational angle $\alpha<1$
    when the interval $[0,1)$ is split into $[0,1-\alpha)$ and
    $[1-\alpha,1)$.  An infinite word $\mathbf{w}$ is Sturmian if and
    only if $\mathsf{p}_{=,\mathbf{w}}(n)=n+1$ for all $n\ge 0$. For rotation words obtained with another partition of the interval $[0,1)$, see \cite{berstelV,didier}. For the abelian equivalence $\sim_{\mathsf{ab}}$ discussed in the next section, Coven and Hedlund proved that an aperiodic word $\mathbf{w}$ is Sturmian if and only if $\mathsf{p}_{\sim_{\mathsf{ab}},\mathbf{w}}(n)=2$ for all $n\ge 1$ \cite{CovenHedlund}.
\end{example}

\begin{remark}
    If the equivalence relation $\sim$ is a {\em congruence} over $A^*$, i.e., for all $u_1,u_2,v_1,v_2$, if $u_i\sim v_i$ for $i=1,2$, then $u_1u_2\sim v_1v_2$, then the complexity function with respect to $\sim$ satisfies
$$\mathsf{p}_{\sim,\mathbf{w}}(m+n)\le \mathsf{p}_{\sim,\mathbf{w}}(m).\mathsf{p}_{\sim,\mathbf{w}}(n).$$
Indeed, every factor of length $m+n$ is the concatenation of a factor of length $m$ with a factor of length $n$ but the converse does not necessarily holds. The concatenation of two factors is not always a factor occurring in $\mathbf{w}$. 
\end{remark}

Since the works of Thue, the study of repetitions and unavoidable patterns is one of the cornerstones in combinatorics on words \cite{SS1,Thue1,Thue2}. An infinite word $\mathbf{w}\in A^\mathbb{N}$ {\em avoids} a set $S\subseteq A^*$, if $\fac_\mathbf{w}\cap S=\emptyset$. If such a word $\mathbf{w}$ exists, we say that  $S$ is {\em avoidable} over $A$. A set $S\subseteq A^*$ is {\em unavoidable} over $A$ whenever, for all $\mathbf{w}\in A^\mathbb{N}$, $\fac_\mathbf{w}\cap S\neq\emptyset$. 
We now introduce the notion of $\sim$-unavoidable pattern. For a survey on repetitions and avoidance, see Rampersad and Shallit's chapter \cite[Chap.~4]{CANT2}.

\begin{definition}[avoidance]
Let $B$ be a finite alphabet. Any finite word over $B$ will be called a {\em pattern}. Let $\sim$ be an equivalence relation over $A^*$. We now define a language-valued morphism that we will call a {\em substitution}\footnote{We here use the term `substitution' to avoid any confusion with the term `morphism'. In the literature, the word substitution is sometimes interchanged with morphism or non-erasing prolongable morphism.}. 
Let $h:B\to 2^{A^*}$ be a map satisfying
\begin{enumerate}
  \item for all $b\in B$, $h(b)$ is a non-empty set and $\varepsilon\not\in h(b)$;
  \item for all $b\in B$, if $u,v\in h(b)$, then $u\sim v$;
\end{enumerate}
Note that the image of every letter $b\in B$ is a subset of an equivalence class for $\sim$. The map $h$ is then extended to a morphism from $B^*$ to $2^{A^*}$ by setting $h(\varepsilon)=\{\varepsilon\}$ and $h(PQ)=h(P)h(Q)$ for all $P,Q\in B^*$. We say that the morphism $h$ is a {\em $\sim$-substitution}.  

Let $P\in B^*$ be non-empty. The pattern $P$ is {\em $\sim$-unavoidable} over $A$ if the language 
$$\mathcal{L}_\sim(P):=\bigcup_{\substack{h:B^*\to 2^{A^*}\\ h\text{ is a }\sim-\text{substitution}}} h(P)\subset A^*$$
is unavoidable over $A$. Otherwise,  $P$ is {\em $\sim$-avoidable} over $A$. If $\sim$ is the equality relation, we get back to the classical notion of avoidance. Every $=$-substitution is a non-erasing morphism and conversely. Note that it is enough to consider in the union defining $\mathcal{L}_\sim(P)$, the substitutions mapping letters of $B$ to equivalence classes of $\sim$. The set $\mathcal{L}_\sim(P)$ is called the {\em pattern language} associated with $P$ and $\sim$.
\end{definition}

\begin{definition}\label{def:square}
     Let $\sim$ be an equivalence relation over $A^*$. A {\em $\sim$-square} (resp. a {\em $\sim$-cube}) is a word in $\mathcal{L}_\sim(XX)$ (resp. in $\mathcal{L}_\sim(XXX)$), $X\in B$. In general, a {\em $\sim$-$n$th-power} is a word in $\mathcal{L}_\sim(X^n)$, $X\in B$, $n\in\mathbb{N}$.
\end{definition}

\begin{question}
    Let $\sim$ be an equivalence relation over $A^*$. 
    \begin{itemize}
      \item[Q.3.1] Given a pattern and an alphabet $A$ of size $k$, is this pattern $\sim$-avoidable over $A$? In particular, are $\sim$-squares or $\sim$-cubes avoidable? As an example, the Thue--Morse word avoids $=$-cubes or even overlaps corresponding to the pattern $XYXYX$. For a proof, for instance, see \cite{Lot}. 
      \item[Q.3.2]  Given a pattern that is $\sim$-avoidable, what is the minimal size of the alphabet such that it can be avoided?
      \item[Q.3.3] Given a pattern $P$, an alphabet $A$ of size $k$ and an integer $\ell$, does there exist an infinite word~$\mathbf{w}$ over $A$
$$\#\bigl( \fac_\mathbf{w} \,\cap\, \mathcal{L}_\sim(P) \bigr)\le \ell.$$
Note that this is Q.3.1 when $\ell=0$.
\item[Q.3.4] A reformulation of the previous question is to ask whether it exists an infinite word~$\mathbf{w}$ such that  
$$\fac_\mathbf{w}\cap \mathcal{L}_\sim(P) \subseteq A^{\le \ell}.$$
Otherwise stated, we only allow short occurrences of the pattern $P$.
\item[Q.3.5] Let $P$ be a pattern over $B$. A finite word $u\in A^*$ is {\em $\sim$-$P$-free}, if 
$$\fac_u\cap \mathcal{L}_\sim(P) =\emptyset.$$
A morphism $f:A^*\to A^*$ is {\em $\sim$-$P$-free} if, for all $\sim$-$P$-free words $u$, $f(u)$ also is $\sim$-$P$-free. Given a pattern $P$ and an alphabet $A$, does there exist a non-trivial prolongable $\sim$-$P$-free morphism? If such a morphism exists, then $P$ is $\sim$-avoidable over the alphabet $A$ \cite{Brandenburg}. As an example, the Thue--Morse morphism $a\mapsto ab$, $b\mapsto ba$ is overlap-free \cite{Lot}.
\item[Q.3.6] One can also be interested in enumeration questions such as counting the number of $\sim$-$P$-free finite words of length $n$. We give a few references where some interesting growth rates are exhibited \cite{Cassaigne,Kob,KS,CRam}. 
    \end{itemize}
\end{question}

\begin{remark}
    In Remark~\ref{rem:var}, a variant of $\sim$-$n$th-power is
    defined in the context of $\ell$-abelian equivalence. This
    definition could be extended to the general context presented here. Let $\sim$ be a binary relation over $A^*$. A word $u$ is a {\em strongly $\sim$-$n$th power} if there exists a `classical' $n$th power such that $u\sim v^n$. 

    There is also a notion of {\em approximated squares} introduced in
    \cite{Ochem}. As an example, a word of the form $uv$ with
    $u\sim_{\mathcal{H},\le k}v$ can be considered as an approximated
    square, with the relation defined in Example~\ref{exa:hamming}.
\end{remark}

\begin{example}
   Related to questions Q.3.3 and Q.3.4, Fraenkel and Simpson have built an infinite word over a $2$-letter alphabet with only 3 squares: $00$, $11$ and $0101$ \cite{Fraenkel94}. (It is easy to see that over a $2$-letter alphabet, any word of length at least $4$ contains a square.)
\end{example}

\begin{remark}
    The reader may also think about pattern matching. This topic will be considered, for two special cases ($\ell$-abelian equivalence and $k$-binomial equivalence), in Remarks~\ref{rem:pm1} and \ref{rem:pm2}.
\end{remark}

The following definition is inspired by the definition given in \cite{codes1} for similarity relations (see Section~\ref{sec:8}) and relational periods (in that case, the parameter $\ell$ is always equal to $1$). For a survey, see \cite[Chap. 6]{CANT2}. We will consider factorizations of an infinite word with words of a fixed length $\ell$ but one could relax this assumption. We are looking for a `period' made of $p$~words of length~$\ell$.

\begin{definition}[periodicity]\label{def:periodicity}
    Let $\sim$ be a reflexive and symmetric binary relation over $A^*$. 
Let $\mathbf{w}$ be an infinite word over $A$. Let $p,\ell \ge 1$ be integers.
    \begin{enumerate}
      \item The word $\mathbf{w}$ has $(p,\ell)$ as {\em global $\sim$-period} if there exists a sequence $(u_i)_{i\ge 0}$ of words of length $\ell$ such that $\mathbf{w}=u_0u_1u_2\cdots$ and, for all $i,j\in\mathbb{N}$, 
$$i\equiv j\pmod{p}\Rightarrow u_i\sim u_j.$$
\item The word $\mathbf{w}$ has $(p,\ell)$ as {\em external $\sim$-period} if there exist $p$ words $v_0,\ldots,v_{p-1}$ and a sequence $(u_i)_{i\ge 0}$ of words of length $\ell$ such that $\mathbf{w}=u_0u_1u_2\cdots$ and, for all $n\in\mathbb{N}$ and all $r\in\{0,\ldots,p-1\}$, $u_{np+r}\sim v_r$.  
\item The word $\mathbf{w}$ has $(p,\ell)$ as {\em local $\sim$-period} if there exists a sequence $(u_i)_{i\ge 0}$ of words of length $\ell$ such that $\mathbf{w}=u_0u_1u_2\cdots$ and, for all $i\ge 0$, $u_i\sim u_{i+p}$.
     \end{enumerate} 
If such a pair $(p,\ell)$ exists, we say that $\mathbf{w}$ is {\em globally} (resp. {\em externally}, {\em locally}) {\em $\sim$-periodic} and $(p,\ell)$ is a  {\em global} (resp. {\em external}, {\em local}) {\em $\sim$-period}.
\end{definition}

\begin{example}
Let $u\sim_{\mathcal{H},\le 1}v$  be the relation defined in Example~\ref{exa:hamming}.
Consider the generalized Thue--Morse word (OEIS {\tt A004128})\footnote{Let $m\ge 2$ and $k\ge 2$ be integers. The infinite word $\mathbf{t}_{k,m}:=(s_k(n)\mod{m})_{n\ge 0}$ over the alphabet $\{0,\ldots,m-1\}$, where $s_k(n)$ is the sum-of-digits of the base-$k$ expansion of $n$, is overlap-free if and only if $k\le m$. It is also known that $\mathbf{t}_{k,m}$ contains arbitrarily long squares \cite{AlloucheShallit2}.} over $\{0,1,2\}$ 
$$012 120 201 120 201 012 201 012 120 120 012 201 \cdots$$
and apply the morphism $0\mapsto aaaa$, $1\mapsto abaa$ and $2\mapsto abba$ to get the word
$$\mathbf{w}=aaaaabaaabba abaaabbaaaaa abbaaaaaabaa abaaabbaaaaa abbaaaaaabaa  \cdots$$
It has external period $(1,4)$, for all $n\ge 0$, $w_{4n}w_{4n+1}w_{4n+2}w_{4n+3}\sim_{\mathcal{H},\le 1} abaa$.
\end{example}

\begin{remark}
    In the previous definition, if $\sim$ is also transitive, i.e., $\sim$ is an equivalence relation, then the three notions of global, external and local $\sim$-periods coincide. In that case, we simply say that a word is $\sim$-periodic or {\em ultimately} $\sim$-periodic if it has a $\sim$-periodic suffix.
\end{remark}

\begin{lemma}
    Let $\sim$ be a congruence over $A^*$. If $\mathbf{w}$ has the pair $(p,\ell)$ as $\sim$-period, then $\mathbf{w}$ has $(1,p\ell)$ as $\sim$-period
\end{lemma}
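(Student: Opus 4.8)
The plan is to unwind the definitions and use the congruence property to collapse the $p$ blocks of length $\ell$ into a single block of length $p\ell$. Since $\sim$ is an equivalence relation here (a congruence is in particular transitive), by the remark preceding the lemma the three notions of global, external and local $\sim$-period all coincide, so it suffices to argue with, say, the global one.

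First I would write $\mathbf{w}=u_0u_1u_2\cdots$ with $|u_i|=\ell$ for all $i$ and $u_i\sim u_j$ whenever $i\equiv j\pmod p$. The natural candidate factorization witnessing the period $(1,p\ell)$ is obtained by grouping consecutive blocks in packets of $p$: set $U_k:=u_{kp}u_{kp+1}\cdots u_{kp+p-1}$ for $k\ge 0$, so that $|U_k|=p\ell$ and $\mathbf{w}=U_0U_1U_2\cdots$. I then need to check that $U_k\sim U_{k'}$ for all $k,k'$, and since $1\mid (k-k')$ trivially, the global $\sim$-period condition for $(1,p\ell)$ reduces exactly to $U_k\sim U_{k'}$ for all $k,k'\ge 0$.

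The key step is the following: for each $r\in\{0,\ldots,p-1\}$ and all $k,k'$ we have $kp+r\equiv k'p+r\pmod p$, hence $u_{kp+r}\sim u_{k'p+r}$. Now apply the congruence property $p$ times (formally, by induction on the number of concatenated factors): from $u_{kp}\sim u_{k'p}$ and $u_{kp+1}\sim u_{k'p+1}$ we get $u_{kp}u_{kp+1}\sim u_{k'p}u_{k'p+1}$, and continuing, $U_k=u_{kp}\cdots u_{kp+p-1}\sim u_{k'p}\cdots u_{k'p+p-1}=U_{k'}$. This gives exactly what is needed. (One should also note the reflexive case $k=k'$ is covered by reflexivity of $\sim$, though it also falls out of the general argument since $\sim$ is reflexive on each factor.)

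I do not expect any serious obstacle: the only thing to be slightly careful about is that the congruence axiom is stated for concatenating \emph{two} words at a time, so the extension to a product of $p$ factors must be done by an explicit (trivial) induction rather than invoked directly; and one must make sure the chosen factorization of $\mathbf{w}$ into blocks of length $p\ell$ is literally the regrouping of the given length-$\ell$ factorization, so that no alignment issues arise. Everything else is a direct translation of Definition~\ref{def:periodicity} with $p$ replaced by $1$ and $\ell$ replaced by $p\ell$.
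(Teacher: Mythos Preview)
Your proof is correct and follows essentially the same route as the paper: regroup the length-$\ell$ blocks into packets $U_k=u_{kp}\cdots u_{kp+p-1}$ of length $p\ell$ and use the congruence property coordinatewise to conclude $U_k\sim U_{k'}$. The only cosmetic difference is that the paper compares each $U_n$ to the fixed word $U_0=u_0\cdots u_{p-1}$ rather than to an arbitrary $U_{k'}$, which amounts to the same thing once transitivity is available.
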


\begin{proof}
    The exists a sequence $(u_i)_{i\ge 0}$ of words of length $\ell$ such that $\mathbf{w}=u_0u_1u_2\cdots$ and, for all $n\in\mathbb{N}$ and $r\in\{0,\ldots,p-1\}$, $u_{np+r}\sim u_r$. Since $\sim$ is a congruence, for all $n\in\mathbb{N}$, 
$u_{np} u_{np+1}\cdots u_{np+p-1}\sim u_0\cdots u_{p-1}$. Thus we can consider the sequence $(u_{np} u_{np+1}\cdots u_{np+p-1})_{n\ge 0}$ of words of length $p\ell$ showing that $\mathbf{w}$ is $(1,p\ell)$-periodic.
\end{proof}

\begin{question}
    Given a (pure) morphic $\mathbf{w}$ (or a word given with a finite description) and a relation $\sim$, is it decidable whether or not $\mathbf{w}$ is of the form $u\mathbf{x}$ where $u$ is a finite word and $\mathbf{x}$ is  globally (resp. externally, locally) $\sim$-periodic? See, for instance, \cite{HarjuLinna,Pansiot86,Durand,KarkiR}.
\end{question}

\begin{remark}
    Other periodicity-related topics such as variants of Fine--Wilf theorem
    \cite{berboa,BS3,fw1,fw2,fw3,fw4,fw5} or codes and defect effect \cite{codes1,codes2} may be
    considered.
\end{remark}

We introduce the last concept of this part of the paper. A subset $X=\{x_0<x_1<x_2<\cdots\}\subseteq\mathbb{N}$ is {\em syndetic} (or, {\em with bounded gaps}) if there exists a constant $C$ such that $x_{i+1}-x_i<C$ for all $i\ge 0$. In the last part of this section, we assume that if $u\sim v$, then $|u|=|v|$.

\begin{definition}[uniform recurrence]
Let $\sim$ be a reflexive and symmetric binary relation over $A^*$.    For every $u\in\fac_\mathbf{w}$, consider the set of positions where occurs a factor in relation with $u$
$$\mathsf{Occ}_{\sim,u}(\mathbf{w}):=\{i\ge 0\mid v_i\cdots v_{i+|u|-1} \sim u\}$$
If for all $u\in\fac_\mathbf{w}$, the set $\mathsf{Occ}_{\sim,u}(\mathbf{w})$ is infinite (resp. infinite and syndetic), then we say that $\mathbf{w}$ is {\em $\sim$-recurrent} (resp. {\em $\sim$-uniformly recurrent}).
\end{definition}

\begin{definition}\label{def:return}
If $\mathbf{w}$ is $\sim$-uniformly recurrent, then we can factorize the word $\mathbf{w}$ using the set of positions $\mathsf{Occ}_{\sim,u}(\mathbf{w})=\{i_1<i_2<\cdots\}$: $$\mathbf{w}=(w_0\cdots w_{i_1-1})(w_{i_1}\cdots w_{i_2-1})(w_{i_2}\cdots w_{i_3-1})\cdots$$
Observe that uniform recurrence implies that the set of words $\{ w_{i_j}\cdots w_{i_{j+1}-1}\mid j\ge 1\}$ is finite. These words are called the {\em $\sim$-return words} to $u$. Each such word shares a common prefix with a word in relation with $u$ for $\sim$. If it is longer than $u$ then it has $u'$ as a prefix for some $u'\sim u$. This notion is similar to the first return map in dynamical systems theory.
\end{definition}

\begin{example}
    Consider the Thue--Morse word (OEIS {\tt A010060}) and the abelian equivalence $\sim_{\mathsf{ab}}$ defined in the next section. With the prefix $01101$, we have marked all the occurrences of a factor of length $5$ having precisely $3$ ones, i.e., that is a rearrangement (or anagram) of this prefix:
$$|\underbrace{0}_{1}|\underbrace{110}_2|\underbrace{100}_3|110|0|\underbrace{1}_4|0|\underbrace{11010}_5|0|\underbrace{10}_6|1|10|0|110|100|110|0|10|1|10|0|1101001\cdots$$
One can prove that the only factors that occur are $\{0,1,10,100,110,11010\}$ mapping this set onto $\{1,\ldots,6\}$ (where the usual convention is that the index is given by the order of first appearance of the factor within the factorized word), we can code the previous factorization by
$$123214151646123216461\cdots$$
Such a sequence is called a {\em derived sequence} and is denoted by $\mathsf{D}_{\sim_{\mathsf{ab}},01101}(\mathbf{w})$
\end{example}

This concept of derived sequence (or descendant) was introduced independently by Durand and Holton and Zamboni \cite{HoltonZamboni}. A morphism $f:A^*\to A^*$ is {\em primitive} if the matrix $M=(|f(a)|_b)_{a,b\in A}\in\mathbb{N}^{A\times A}$ is primitive, i.e., there exists $n$ such that $M^n>0$.

\begin{theorem}\cite{Durand:1998b}
    An infinite uniformly recurrent word $\mathbf{w}$ is of the form $g(f^\omega(a))$ where $g:A^*\to B^*$ is a coding and $f:A^*\to A^*$ is a primitive morphism prolongable on $a$ if and only if the set $\{\mathsf{D}_{=,p}(\mathbf{w})\mid p\text{ is a prefix of }\mathbf{w}\}$ is finite.
\end{theorem}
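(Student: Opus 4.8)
The plan is to prove both implications of this characterization of primitive substitutive words via finiteness of the set of derived sequences. Throughout, I work with the equality relation, so $\mathsf{Occ}_{=,p}(\mathbf{w})$ is the set of genuine occurrences of the prefix $p$, and $\mathsf{D}_{=,p}(\mathbf{w})$ codes the factorization of $\mathbf{w}$ by $=$-return words to $p$. Note first that since $\mathbf{w}$ is uniformly recurrent, each prefix $p$ has only finitely many return words, so every $\mathsf{D}_{=,p}(\mathbf{w})$ is a well-defined infinite word over a finite alphabet; the question is whether, as $p$ ranges over all prefixes, only finitely many distinct such words arise.

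For the forward direction, assume $\mathbf{w}=g(f^\omega(a))$ with $f$ primitive and prolongable on $a$ and $g$ a coding. The key step is a \emph{recognizability} (or \emph{circularity}) argument: because $f$ is primitive, the fixed point $f^\omega(a)$ admits a bounded ``desubstitution'' — there is a constant $L$ such that any sufficiently long factor determines uniquely how it is cut into blocks $f(b)$, $b\in A$. Consequently, if $p$ is a long prefix of $\mathbf{w}$, the return words to $p$ are images under $f$ (up to bounded boundary effects) of return words to a shorter prefix $p'$ with $f(p')$ roughly $p$; iterating, every prefix's derived sequence is, up to a bounded-length initial adjustment and relabeling, obtained by applying $f$ (conjugated by $g$) to a derived sequence of one of finitely many ``small'' prefixes. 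Since there are finitely many short prefixes and $f$ acts in a controlled way, only finitely many derived sequences occur. I would make this precise by bounding the length of the longest return word to $p$ linearly in $|p|$ (a standard consequence of primitivity / linear recurrence) and then showing each $\mathsf{D}_{=,p}(\mathbf{w})$ is itself a morphic image of $f^\omega(a)$ under one of finitely many codings.

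For the converse, suppose $\{\mathsf{D}_{=,p}(\mathbf{w})\mid p\text{ prefix of }\mathbf{w}\}$ is finite. The strategy is to extract a self-similar structure. Consider the nested sequence of prefixes $p_1\prec p_2\prec\cdots$ of $\mathbf{w}$; their derived sequences take only finitely many values, so two of them coincide, say $\mathsf{D}_{=,p}(\mathbf{w})=\mathsf{D}_{=,q}(\mathbf{w})$ for prefixes $p\prec q$. Writing $\mathbf{w}$ in terms of its return words to $p$ gives $\mathbf{w}=\tau(\mathsf{D}_{=,p}(\mathbf{w}))$ for a suitable coding-like map $\tau$ sending letters to the return words, and similarly for $q$; comparing the two factorizations and using that $q$ is itself a concatenation of return words to $p$, one obtains a morphism $f$ on the (finite) return-word alphabet such that the derived sequence is a fixed point of $f$, and $\mathbf{w}=g(f^\omega(a))$ for the induced coding $g$. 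Uniform recurrence of $\mathbf{w}$ forces $f$ to be primitive: every return word to $p$ reappears with bounded gaps, which translates into every letter appearing in every sufficiently long factor of $f^\omega(a)$, i.e. primitivity of the incidence matrix.

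The main obstacle is the forward direction's recognizability step: controlling exactly how return words to a prefix decompose under desubstitution requires the bounded synchronization delay of primitive morphisms, and bookkeeping the boundary blocks at the two ends of each $f$-image carefully enough to conclude finitely many derived sequences — rather than merely a polynomial or otherwise bounded-growth family — is where the real work lies. (A clean reference-based proof would simply invoke the recognizability theorem for primitive substitutions and Durand's original argument \cite{Durand:1998b}; here I sketch the ideas behind it.)
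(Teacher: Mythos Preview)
The paper does not give its own proof of this theorem: it is stated as a cited result from \cite{Durand:1998b} and immediately followed by the companion Proposition~5.1, also without proof. So there is no in-paper argument to compare against; your sketch is really a reconstruction of Durand's original proof.

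As such a reconstruction, your converse direction is on the mark: finding two prefixes $p\prec q$ with $\mathsf{D}_{=,p}(\mathbf{w})=\mathsf{D}_{=,q}(\mathbf{w})$ and reading off a primitive morphism from the fact that return words to $q$ are concatenations of return words to $p$ is exactly Durand's mechanism. Your forward direction, however, is somewhat off in emphasis. Durand does not argue via recognizability/desubstitution in the way you describe. The actual engine is: (i) for a fixed point of a primitive morphism over $A$, the number of return words to any prefix is bounded by $\#A$, so all derived sequences live over alphabets of uniformly bounded size; (ii) each derived sequence is itself a fixed point of a primitive ``return substitution''; and (iii) the map $p\mapsto \mathsf{D}_{=,p}(\mathbf{w})$ is compatible with iteration in the sense that deriving a derived sequence gives another derived sequence of $\mathbf{w}$, and along the tower $p=f^n(a)$ the return substitutions eventually repeat. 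Your line ``each $\mathsf{D}_{=,p}(\mathbf{w})$ is itself a morphic image of $f^\omega(a)$ under one of finitely many codings'' is not the right conclusion---the derived sequences are fixed points of (finitely many) primitive morphisms, not codings of the original fixed point---and recognizability, while true for primitive aperiodic substitutions, is not the lever Durand pulls. If you want to flesh this out, the cleanest route is to follow steps (i)--(iii) rather than the desubstitution picture.
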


\begin{prop}\cite[Prop.~5.1]{Durand:1998b}
    Let $f$ be a primitive morphism prolongable on the letter $a$. For every prefix $p\neq\varepsilon$ of $f^\omega(a)$, the sequence $\mathsf{D}_{=,p}(f^\omega(a))$ is also the fixed point of a primitive morphism.
\end{prop}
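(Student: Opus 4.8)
The plan is to exploit the classical result of Durand (the theorem quoted just above) together with a key property of derived sequences: taking a derived sequence behaves well under iteration, in the sense that a derived sequence of a derived sequence is again a derived sequence of the original word (with respect to a longer prefix). First I would set $\mathbf{x}=f^\omega(a)$ and fix a nonempty prefix $p$ of $\mathbf{x}$. By the quoted theorem, since $\mathbf{x}$ itself is $g(f^\omega(a))$ with $g$ the identity coding and $f$ primitive, the set $\{\mathsf{D}_{=,q}(\mathbf{x})\mid q\text{ is a prefix of }\mathbf{x}\}$ is finite; call this set $\mathcal{D}$.

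Next I would show that $\mathcal{D}$ is, up to the obvious recoding, closed under the operation of passing to a derived sequence. Concretely, if $\mathbf{y}=\mathsf{D}_{=,q}(\mathbf{x})\in\mathcal{D}$ and $r$ is a nonempty prefix of $\mathbf{y}$, then there is a prefix $q'$ of $\mathbf{x}$ (obtained by expanding the first few return-word blocks coded by $r$ back into factors of $\mathbf{x}$) such that $\mathsf{D}_{=,r}(\mathbf{y})$ coincides, after renaming letters, with $\mathsf{D}_{=,q'}(\mathbf{x})$. Hence $\{\mathsf{D}_{=,r}(\mathbf{y})\mid r\text{ a prefix of }\mathbf{y}\}$ is, up to recoding, a subset of $\mathcal{D}$, so it is finite. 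Applying the quoted theorem in the other direction to $\mathbf{y}=\mathsf{D}_{=,p}(\mathbf{x})$, which is uniformly recurrent because $\mathbf{x}$ is (primitivity of $f$ makes $\mathbf{x}$ uniformly recurrent, and return words of a uniformly recurrent word form a uniformly recurrent sequence), we conclude that $\mathbf{y}=h(e^\omega(b))$ for some coding $h$ and some primitive morphism $e$ prolongable on $b$.

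It remains to upgrade this from ``$\mathbf{y}$ is the coding of a primitive morphic word'' to ``$\mathbf{y}$ is itself the fixed point of a primitive morphism''. Here I would use the standard fact that a derived sequence $\mathsf{D}_{=,p}(\mathbf{x})$ is aperiodic (as $\mathbf{x}$ is, being a fixed point of a primitive morphism with $\mathsf{p}_{=,\mathbf{x}}$ unbounded by Theorem~\ref{the:mh}, unless $\mathbf{x}$ is periodic, a degenerate case one handles separately) and that each letter of $\mathbf{y}$ actually occurs, so the coding $h$ can be absorbed: one passes to a suitable power $e^k$ and groups blocks so that the coding becomes injective on the relevant prefix, yielding an honest primitive morphism whose fixed point is $\mathbf{y}$. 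The main obstacle I anticipate is exactly this last bookkeeping step—making precise how the recoding $h$ is eliminated and checking primitivity of the resulting morphism—rather than the conceptual core, which is the ``derived-sequence-of-a-derived-sequence'' stability argument. A clean way around the bookkeeping is to invoke the known structural description of derived sequences of primitive substitutive words (return words of a primitive substitution fixed point are generated by an explicitly constructed primitive substitution), which is essentially the content one is asked to prove; so in the write-up I would either cite that construction or carry out the block-recoding carefully once and for all.
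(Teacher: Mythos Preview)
The survey paper does not give its own proof of this proposition; it merely cites Durand's original paper. So there is no in-paper argument to compare against. That said, two substantive issues with your proposal deserve comment.

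First, your plan is circular relative to Durand's development. In \cite{Durand:1998b}, the proposition you are trying to prove is precisely the engine used to establish the characterization theorem you want to invoke: one shows directly that $f$ (or a power of $f$) sends each return word to $p$ to a concatenation of return words to $p$, which yields an explicit primitive morphism on the return-word alphabet whose fixed point is $\mathsf{D}_{=,p}(f^\omega(a))$. The finiteness of the set of derived sequences then follows because iterating this construction eventually revisits an alphabet of the same size. Using the theorem to recover the proposition reverses the logical flow.

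Second, even if one grants the theorem as a black box, your step~5 is a genuine gap, not bookkeeping. The characterization only delivers $\mathbf{y}=h(e^\omega(b))$ with $h$ a coding; it does \emph{not} say $\mathbf{y}$ is itself a fixed point of a primitive morphism. In general a coding of a primitive fixed point need not be a primitive fixed point, and your sketch (``pass to a power $e^k$ and group blocks so the coding becomes injective'') does not produce a morphism fixing $\mathbf{y}$ without exactly the return-word construction you are trying to avoid. You recognize this yourself in the last sentence: the ``clean way around'' you mention \emph{is} the content of the proposition. The honest route is the direct one---verify that the image under $f$ of a return word to $p$ factors uniquely as a product of return words to $p$ (using that $f(p)$ begins with $p$ and uniform recurrence), read off the induced morphism on the finite return-word alphabet, and check primitivity from primitivity of $f$.
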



\section{Abelian framework}\label{sec:a1}

Erd\H{o}s raised the question whether abelian squares can be avoided by an infinite word over an alphabet of size $4$.  We refer to the paper \cite{Erdos1} that can easily be accessed\footnote{It is common to refer to another Erd\H{o}s'paper: Some unsolved problems, {\em Magyar Tud. Akad. Mat. Kutat\'o Int. K\"ozl.} {\bf 6} (1961), 221--254.}, the last problem of the list of $28$ problems is the following: ``{\em Let $N(k)$ be the least number $N$ with the property that each sequence $\{s_n\}_{n=1}^N$ of numbers taken from the set $\{1,\ldots,k\}$ contains two adjacent blocks such that each is a rearrangement of the other. My earliest conjecture, that $N(k)=2^k-1$, has been disproved by Bruijn and myself. It is not even known whether $N(4)<\infty$.}'' (Exhausting all the possible cases, it is an easy exercise to prove that any long enough finite word over an alphabet of size $3$ contains an abelian square.)

\begin{definition}
Let $A=\{1<\cdots <k\}$ be a finite alphabet that is assumed to be ordered. We consider the {\em abelianization map} (also called {\em Parikh map}, see Theorem~\ref{the:parikh}) denoted by $\Psi:A^*\to\mathbb{N}^k$. It is a morphism of monoids where $\Psi(u)=(|u|_1,\ldots,|u|_k)^\mathsf{T}$ for all $u\in A^*$. Indeed, $\Psi(uv)=\Psi(u)+\Psi(v)$ for all $u,v\in A^*$. In particular, if $\Psi$ is extended to languages, $\Psi^{-1}(\Psi(L))$ is the {\em commutative closure} of the language $L$.
\end{definition}

\begin{definition}
The notion of abelian square introducing this section is a special case of Definition \ref{def:square} when considering the {\em abelian equivalence} $\sim_{\mathsf{ab}}$ over $A^*$ defined by 
$$u \sim_{\mathsf{ab}} v \Leftrightarrow \Psi(u)=\Psi(v).$$
Otherwise stated, $u$ is obtained by applying a permutation to the letters of $v$. The relation $\sim_{\mathsf{ab}}$ is clearly a congruence.
\end{definition}

About Definition \ref{def:complexity}, one introduces the notion of {\em abelian complexity} $\mathsf{p}_{\sim_{\mathsf{ab}},\mathbf{w}}$ where factors occurring in $\mathbf{w}$ are counted up to abelian equivalence.   In contrast with the usual factor complexity function $\mathsf{p}_{=,\mathbf{w}}$ which is non-decreasing, this property no longer holds for  $\mathsf{p}_{\sim_\mathsf{ab},\mathbf{w}}$: it is possible that $\mathsf{p}_{\sim_\mathsf{ab},\mathbf{w}}(n)>\mathsf{p}_{\sim_\mathsf{ab},\mathbf{w}}(n+1)$ for some $n$. For instance, for the Tribonacci word~$\mathbf{t}$ (OEIS {\tt A000073}) $\mathsf{p}_{\sim_\mathsf{ab},\mathbf{t}}(7)=4$ but $\mathsf{p}_{\sim_\mathsf{ab},\mathbf{t}}(8)=3$. A few references are \cite{Ric1,Ric2,BS1,Tur1,Tur2} and \cite{RampersadPF} where the abelian complexity of the paper-folding word is shown to be $2$-regular (in the sense of Allouche and Shallit), see, for instance, \cite{BerstelReutenauer}. In particular, bounded abelian complexity is related to balance properties and existence of frequencies \cite{Adam2003}.

\begin{theorem}\cite{Ric2}
An infinite word has a bounded abelian complexity if and only if it is $C$-balanced for some $C>0$, i.e., for all $u,v\in\fac_\mathbf{w}(n)$, $n\ge 1$, we have $|\, |u|_a-|v|_a\, |\le C$ for every letter $a$ in the alphabet.  
\end{theorem}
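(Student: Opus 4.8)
The plan is to establish the two implications separately, the forward one (read here as ``$C$-balanced $\Rightarrow$ bounded abelian complexity'') by a crude count of Parikh vectors, and the converse by a sliding-window argument showing that a large letter-imbalance at some length $n$ is witnessed by many distinct abelian classes among the factors of that length.

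For the forward direction, suppose $\mathbf{w}$ is $C$-balanced and set $k=\#A$. Fix $n\ge 1$. First I would note that, for each letter $a\in A$, the $C$-balance hypothesis forces the set $\{\,|u|_a : u\in\fac_\mathbf{w}(n)\,\}$ to lie in an interval of integers of length $C$, hence to have at most $C+1$ elements. Since the abelian class of a factor $u$ is exactly its Parikh vector $\Psi(u)=(|u|_1,\dots,|u|_k)^\mathsf{T}$ and the coordinates satisfy $|u|_1+\cdots+|u|_k=n$, a Parikh vector of a length-$n$ factor is determined by its first $k-1$ coordinates; hence $\mathsf{p}_{\sim_\mathsf{ab},\mathbf{w}}(n)\le (C+1)^{k-1}$ for every $n$, a bound independent of $n$ (the weaker $(C+1)^k$ would also do).

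For the converse I would prove the quantitative statement that, for every $n\ge 1$ and every $a\in A$,
\[
\mathsf{p}_{\sim_\mathsf{ab},\mathbf{w}}(n)\ \ge\ 1+\max_{u,v\in\fac_\mathbf{w}(n)}\bigl|\,|u|_a-|v|_a\,\bigr|,
\]
from which the claim is immediate: if $\mathsf{p}_{\sim_\mathsf{ab},\mathbf{w}}(n)\le K$ for all $n$, then $|\,|u|_a-|v|_a\,|\le K-1$ for all $n$, all $a$, all $u,v\in\fac_\mathbf{w}(n)$, so $\mathbf{w}$ is $C$-balanced with $C=\max(K-1,1)$. To prove the inequality, fix $n$, a letter $a$ and factors $u,v\in\fac_\mathbf{w}(n)$; say $u$ occurs at position $i$ and $v$ at position $j$, and assume $i\le j$ (otherwise exchange $u$ and $v$, the quantity being symmetric). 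For $i\le t\le j$ put $g(t):=|\,w_t w_{t+1}\cdots w_{t+n-1}\,|_a$; because $\mathbf{w}$ is infinite, each such window is a genuine factor of $\mathbf{w}$, and $g(i)=|u|_a$, $g(j)=|v|_a$. Moving the window one step to the right deletes $w_t$ and appends $w_{t+n}$, so $g(t+1)-g(t)\in\{-1,0,1\}$; being integer-valued with unit increments, $g$ therefore attains every integer between $g(i)$ and $g(j)$. Each such value is realized by some length-$n$ factor of $\mathbf{w}$, and two factors with different values of $|\cdot|_a$ have different Parikh vectors, hence lie in different abelian classes; so $\fac_\mathbf{w}(n)$ meets at least $|\,|u|_a-|v|_a\,|+1$ classes, as wanted.

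The hard part, such as it is, will be the bookkeeping in the sliding-window argument: one has to connect two factors that may occur at arbitrary positions of $\mathbf{w}$ (overlapping or far apart) through a chain of intermediate factors of the same length, and it is precisely here that infiniteness of $\mathbf{w}$ is used to guarantee that all the intermediate windows $w_t\cdots w_{t+n-1}$ for $i\le t\le j$ are legitimate factors. Once the displayed inequality is in place, both implications of the theorem drop out.
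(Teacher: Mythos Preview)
Your argument is correct. Note, however, that the paper itself does not supply a proof of this theorem: it merely quotes the result with a citation to Richomme, Saari and Zamboni \cite{Ric2}. So there is no ``paper's own proof'' to compare against here. That said, your two implications are exactly the standard ones from the cited source: the pigeonhole count $(C+1)^{k-1}$ on Parikh vectors for one direction, and the sliding-window/intermediate-value argument for the other. The quantitative inequality you extract,
\[
\mathsf{p}_{\sim_\mathsf{ab},\mathbf{w}}(n)\ \ge\ 1+\max_{u,v\in\fac_\mathbf{w}(n)}\bigl|\,|u|_a-|v|_a\,\bigr|,
\]
is precisely what is proved in \cite{Ric2}, and your derivation of it is clean.
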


\begin{properties}
Ker{\"a}nen has built a pure morphic word  over a $4$-letter alphabet that avoids abelian squares \cite{Keranen,Brown}. Dekking has obtained an infinite word over a $3$-letter alphabet that avoids abelian cubes, and an infinite word over a $2$-letter alphabet that avoids abelian $4$-powers \cite{Dekking}. (Note that in all these results, the size of the alphabet is optimal.)
\end{properties}

About {\em abelian power-free morphisms}, see \cite{Carpi,CR2}. See also \cite{CaRa}. 

\begin{properties}
 Every infinite word over a $2$-letter alphabet contains arbitrarily long abelian squares and there exists an infinite word that avoids squares of the form $uu'$ with $u\sim_{\mathsf{ab}}u'$ and $|u|\ge 3$ \cite{Entringer}.   
\end{properties}

About enumeration results like counting the number of finite words of length $n$ avoiding abelian cubes, see \cite{Aberkane,Carpi2}. 
\medskip

On the characterization of classes of words with respect to abelian equivalence and in particular, Sturmian words. Let us mention the following results. Extending a result of Vuillon in \cite{Vuillon} to $\sim_{\mathsf{ab}}$. (recall Definition~\ref{def:return} of return words.) 

\begin{theorem}\cite{PuZ}
    A recurrent infinite word is Sturmian if and only if each
of its factors has two or three $\sim_{\mathbf{ab}}$-return words.
\end{theorem}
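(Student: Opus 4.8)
The plan is to relate $\sim_{\mathsf{ab}}$-return words of a recurrent word $\mathbf w$ to the ordinary ($=$-)return words, and then invoke Vuillon's theorem, which states that a recurrent word is Sturmian if and only if each of its factors has exactly two (classical) return words. First I would fix a factor $u\in\fac_{\mathbf w}$ and look at the set $\mathsf{Occ}_{\sim_{\mathsf{ab}},u}(\mathbf w)$ of positions $i$ where the length-$|u|$ factor starting at $i$ is an anagram of $u$. The key structural observation is that, since $\sim_{\mathsf{ab}}$ preserves length and $\mathbf w$ is recurrent, this set is the disjoint union, over all abelian classes $[v]$ with $v\in\fac_{\mathbf w}(|u|)$ and $v\sim_{\mathsf{ab}} u$, of the ordinary occurrence sets $\mathsf{Occ}_{=,v}(\mathbf w)$. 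Consequently a $\sim_{\mathsf{ab}}$-return word to $u$ is precisely a word that appears between two consecutive elements of this merged set, and one must control how many distinct such words there are in terms of the number of abelian classes and the classical return words of each representative.

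The heart of the argument is the dichotomy: for a Sturmian word $\mathbf w$, the abelian complexity satisfies $\mathsf{p}_{\sim_{\mathsf{ab}},\mathbf w}(n)=2$ for all $n\ge1$ (the Coven--Hedlund result quoted in the excerpt), so each length is split into at most two abelian classes; combined with Vuillon's theorem (two classical return words per factor) one gets an a priori bound on the number of $\sim_{\mathsf{ab}}$-return words, and a short case analysis on the interleaving pattern of the two occurrence sets shows this number is always two or three — never one, because a single $\sim_{\mathsf{ab}}$-return word would force $\mathbf w$ to be $\sim_{\mathsf{ab}}$-periodic hence (for binary alphabets, via balance) eventually periodic, contradicting aperiodicity of Sturmian words. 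For the converse, I would argue contrapositively: if $\mathbf w$ is recurrent but not Sturmian, then either $\mathbf w$ is eventually periodic — in which case it is purely periodic (being recurrent), and one exhibits a factor, e.g. a long enough prefix of the period, whose $\sim_{\mathsf{ab}}$-return words reduce to a single word — or $\mathbf w$ is aperiodic with $\mathsf{p}_{=,\mathbf w}(n)\ge n+2$ for some $n$, and then by the Coven--Hedlund characterization $\mathsf{p}_{\sim_{\mathsf{ab}},\mathbf w}(m)\ge3$ for some $m$; one then lifts this abelian-complexity surplus to produce a factor $u$ with at least four $\sim_{\mathsf{ab}}$-return words by choosing $u$ long enough that each of the relevant abelian classes contains several factors with disjoint return-word behaviour.

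The main obstacle I expect is the converse direction in the aperiodic non-Sturmian case: having "too many" abelian classes at some length $m$ does not immediately hand you a single factor $u$ with four $\sim_{\mathsf{ab}}$-return words, because return words to $u$ depend on the global arrangement of occurrences, not just on the local count of classes. Bridging this gap is where the real work lies — one needs a Rauzy-graph or $S$-adic type argument tracking how bispecial factors force the occurrence sets of the abelian classes to interleave in at least four distinct ways, or alternatively a careful induction on factor length showing the number of $\sim_{\mathsf{ab}}$-return words is non-decreasing along prefixes of a suitable factor. A cleaner route, which I would try first, is to reduce directly to Vuillon's theorem by showing that for \emph{any} recurrent $\mathbf w$ the number of $\sim_{\mathsf{ab}}$-return words to a factor $u$ equals the number of classical return words to $u$ whenever $u$ is "abelian-rich" enough (i.e.\ its abelian class is a singleton in $\fac_{\mathbf w}(|u|)$), and then observing that Sturmian words have arbitrarily long factors of both kinds — this localizes the whole problem to the comparison $\{2\}$ versus $\{2,3\}$ and avoids the heavy machinery. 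Finally, I would double-check the boundary cases $|u|=1$ and the interplay with recurrence (needed so that every occurrence set is infinite and the return-word factorization of Definition~\ref{def:return} makes sense).
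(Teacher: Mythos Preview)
The paper does not prove this theorem: it is a survey, and the result is simply quoted from Puzynina and Zamboni \cite{PuZ} with no argument supplied. There is therefore no proof in the paper to compare your attempt against.

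On the substance of your sketch: the overall strategy (Vuillon's theorem plus the Coven--Hedlund abelian-complexity characterization) is the natural one and is indeed what underlies the original proof in \cite{PuZ}. But your proposed ``cleaner route'' has a concrete flaw. You want to restrict attention to factors $u$ whose abelian class is a singleton in $\fac_{\mathbf w}(|u|)$, and you claim that Sturmian words have arbitrarily long such factors. This is false: for a Sturmian word the abelian complexity equals $2$ for \emph{every} $n\ge 1$, so for $n\ge 2$ the $n+1$ factors of length $n$ fall into exactly two abelian classes, neither of which is a singleton. The only abelian-singleton factors are the two letters themselves, so your shortcut collapses to the trivial case $|u|=1$ and cannot handle the full statement. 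The genuine obstacle you identify in the converse direction --- lifting an abelian-complexity surplus to a factor with at least four abelian returns --- is real and is precisely where the work in \cite{PuZ} lies; it is not bypassed by the route you propose.
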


\begin{properties}\cite{RSV}
    \begin{enumerate}
      \item Let $\mathbf{w}$ be a recurrent word. The set $\sim_{\mathbf{ab}}$-return words is finite if and only if $\mathbf{w}$ is periodic.
      \item Let $\mathbf{w}$ be a Sturmian word (we assume that the notion of intercept is understood). The set of $\sim_{\mathbf{ab}}$-return words to the prefixes is finite if and only if $\mathbf{w}$ has a non-zero intercept.
    \end{enumerate}
\end{properties}
 The latter result can be extended to rotation words \cite{RRS}. 

 \begin{remark}
     Closely related to abelian equivalence, one can also consider an {\em additive relation} where two words $u,v$ (one can add the extra assumption that $|u|=|v|$) over a finite alphabet of integers are {\em additively equivalent}, if $\sum u_i=\sum v_i$. For instance, $134233$ is an additive square. The paper \cite{CasSha} shows the existence of an infinite word over $\{0,1,3,4\}$ avoiding additive cubes (OEIS {\tt A191818}). Also see \cite{Rao15} where subsets of $\mathbb{N}$ of size $3$ are considered.
 \end{remark}


\section{$k$-abelian equivalence}\label{sec:a2}
We now present a first generalization of the concept of abelian equivalence stemming from a classical result in formal language theory: Parikh's theorem. See any standard textbook on formal language theory, e.g. \cite{Sudkamp}, in particular for the definition of a context-free language. A set $M\subseteq\mathbb{N}^d$ is said to be {\em linear}, if there exist
$x\in\mathbb{N}^d$ and a finite set (possibly empty) $V=\{v_1,\ldots,v_k\}\subset\mathbb{N}^d$
such that $$M=\left\{x +\sum_{i=1}^k \lambda_i\, v_i\mid \lambda_1,\ldots,\lambda_k\in \mathbb{N}\right\}.$$ A
finite union of linear sets is a {\em semi-linear} set.

\begin{theorem}[Parikh's theorem \cite{Parikh}]\label{the:parikh}
    If $L$ is a context-free language over a $k$-letter alphabet, then $\Psi(L)$ is a semi-linear set of $\mathbb{N}^k$.
\end{theorem}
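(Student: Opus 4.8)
The plan is to prove Parikh's theorem by induction on the structure of a context-free grammar in Chomsky normal form, or more cleanly, by analyzing derivation trees. First I would fix a context-free grammar $G=(N,A,P,S)$ generating $L$, and observe that $\Psi(L)$ depends only on the multiset of letters produced, so we may work directly with derivation trees modulo the order of children. The key combinatorial object is the notion of a derivation tree in which no root-to-leaf path repeats a nonterminal; call such a tree \emph{reduced}. Since $|N|$ is finite, there are only finitely many reduced trees (their height is bounded by $|N|$), hence finitely many distinct Parikh vectors $x_1,\ldots,x_m$ arising as $\Psi$ of the yield of a reduced tree.

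The main step is a pumping-style decomposition: any derivation tree can be obtained from a reduced tree by repeatedly inserting ``pump'' segments, i.e. subtrees whose root and some interior node carry the same nonterminal $B$, the interior node being the root of a further subtree deriving a terminal string from $B$. Each insertion adds to the Parikh vector a fixed ``loop vector'' $v_j$ (the Parikh vector of the extra material introduced by one such loop together with the terminal completion), drawn from a finite set $\{v_1,\ldots,v_k\}$ determined by $G$. Conversely, given a reduced base tree with vector $x_i$ and any choice of nonnegative multiplicities for the compatible loops, one can actually assemble a valid derivation tree realizing $x_i+\sum_j\lambda_j v_j$ — this is where one must be careful that a loop on nonterminal $B$ can only be inserted if $B$ actually occurs in the current tree, but since the base tree is fixed and using a loop never destroys the occurrence of its own head nonterminal, every combination is realizable. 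Organizing which loops are compatible with which base trees partitions $\Psi(L)$ into a finite union, and each piece is exactly a linear set $\{x_i+\sum\lambda_j v_j\}$; the finite union of these is semi-linear, giving the result.

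I expect the main obstacle to be the ``conversely'' direction: showing that every formal combination $x_i + \sum_j \lambda_j v_j$ is \emph{actually attained} by some genuine derivation, not just bounded above or approximated. The subtlety is that inserting pump segments may change which nonterminals are present, so one must choose the set of loop vectors and the compatibility relation carefully — the cleanest route is to let a ``loop'' be indexed by a nonterminal $B$ that appears in the base tree and record both the extra Parikh contribution of going $B\Rightarrow^* \alpha B \beta$ once and, separately, handle the fact that after all pumping every pending nonterminal must still be expanded to a terminal string (which is always possible for a useful grammar, and useless nonterminals can be removed first). An alternative, perhaps slicker, approach would be to invoke the fact that context-free languages are generated by grammars whose derivation-tree structure can be captured by a finite pushdown mechanism and then cite a semigroup/rational-series argument; but for a self-contained survey the derivation-tree pumping argument is the natural one to sketch, with a reference such as \cite{Parikh} or a standard textbook like \cite{Sudkamp} for the full details.
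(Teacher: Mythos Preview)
The paper does not give a proof of this theorem at all: it is stated as a classical result with a citation to Parikh's original paper \cite{Parikh}, and the preceding sentence directs the reader to a standard textbook \cite{Sudkamp} for background. This is a survey, and Parikh's theorem is invoked only as motivation for the generalized Parikh mappings introduced immediately afterwards. So there is no ``paper's own proof'' against which to compare your proposal.

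That said, your sketch is the standard derivation-tree pumping argument and is essentially the right shape. You have correctly located the real difficulty in the converse direction. One small correction: bounding the height of a reduced tree by $|N|$ does not by itself make the set of reduced trees finite --- you also need that the grammar has finitely many productions, which bounds the branching; this is trivially true but worth saying. More substantially, the compatibility issue you flag is genuine and your informal fix (``using a loop never destroys the occurrence of its own head nonterminal'') is not quite enough, because a loop on $B$ may introduce a \emph{new} nonterminal $C$, and then loops on $C$ become available that were not compatible with the original base tree. The clean way to handle this in a full proof is to partition derivations by the \emph{set} of nonterminals they use: for each subset $U\subseteq N$ containing $S$, take as base vectors the Parikh images of minimal derivations using exactly the nonterminals in $U$, and as periods the Parikh images of pump segments that use only nonterminals from $U$. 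This makes the linear-set description honest in both directions. With that refinement your plan would go through; for the purposes of this survey, a pointer to \cite{Parikh} or \cite{Sudkamp} is what the paper actually does.
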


Let $\ell\ge 1$. Trying to strengthen Parikh's theorem, instead of counting occurrences of letters, we could
count occurrences of factors of length at most $\ell$ \cite{Karhu80}. In that setting, assuming that $A=\{1<\cdots <k\}$ is ordered, we get extra information on the structure of the word given by an extended abelianization 
map, also called {\em generalized Parikh mapping},
$$\Psi_\ell:A^*\to \mathbb{N}^{k+k^2+\cdots +k^\ell}$$
where, for all $u\in A^*$, 
\begin{equation}
    \label{eq:psil}
\Psi_\ell(u)=(|u|_1,\ldots,|u|_k,|u|_{11},\ldots,|u|_{kk},\ldots,|u|_{1^\ell},\ldots,|u|_{k^\ell})
\end{equation}
and $|u|_v$ denotes the number of occurrences of the factor $v$ in $u$. Note that the size of $\Psi_\ell(u)$ grows exponentially with $\ell$: it is a vector of size $k(k^\ell-1)/(k-1)$. As an example, $|0110100|_{10}=2$ and $|01110|_{11}=2$ (overlaps are allowed). The following relation was introduced in \cite{KSZ}.

\begin{definition}\label{def:kabelian} Let $\ell\ge 1$ be an integer.
    Two finite words $u$ and $v$ are {\em $\ell$-abelian equivalent}, if $\Psi_\ell(u)=\Psi_\ell(v)$. We write $u\sim_{\ell-\mathsf{ab}}v$. Otherwise stated, if, for all words $x\in A^{\le \ell}$, $|u|_x=|v|_x$. Clearly, for $\ell=1$ we are back to the usual abelian equivalence.
\end{definition}
Note that, for all $n\le |u|$, $$|u|=\sum_{x\in A^n} |u|_x+n-1$$
and $\Psi_\ell(A^*)$ is a strict subset of $\mathbb{N}^{k(k^\ell-1)/(k-1)}$. 
\begin{example}
The words $u=010110$ and $v=011010$ are $3$-abelian equivalent. We have $|u|_0=3=|v|_0$, $|u|_1=3=|v|_1$, $|u|_{00}=0=|v|_{00}$, $|u|_{01}=2=|v|_{01}$, $|u|_{10}=2=|v|_{10}$,  $|u|_{11}=1=|v|_{11}$. Finally, $|u|_{010}=1=|v|_{010}$,  $|u|_{101}=1=|v|_{101}$, $|u|_{011}=1=|v|_{011}$ $|u|_{110}=1=|v|_{110}$.
 But the two words $u$ and $v$ are not $4$-abelian equivalent: the factor $1010$ occurs in $v$ but not in $u$. The relation $\sim_{(\ell+1)-\mathsf{ab}}$ is a refinement of $\sim_{\ell-\mathsf{ab}}$ (see the lattice in Figure~\ref{fig:lattice}).
\end{example}
\begin{remark}\label{rem:rational}
    In terms of rational series (we refer the reader to \cite{BerstelReutenauer} for definitions), since the characteristic series of $A^*$ denoted by $\underline{A}^*$ is rational, we deduce that the formal series in $\mathbb{N}\langle\langle A\rangle\rangle$ 
$$\underline{A}^*u\underline{A}^*=\sum_{w\in A^*}|w|_u\, w$$
is rational. 
\end{remark}
It is not difficult to see that two words $u$ and $v$ of length at least $\ell-1$ are $\ell$-abelian equivalent if and only if they share respectively the same prefix and the same suffix of length $\ell-1$ and if $|u|_x=|v|_x$ for all words $x$ of length $\ell$. This property implies that $\sim_{\ell-\mathsf{ab}}$ is again a congruence. In \cite{KSZ}, the growth of $\mathsf{g}_{\sim_{\ell-\mathsf{ab}}}$ is estimated. Ultimately periodic words and Sturmian words can be characterized by the $\ell$-abelian complexity function.

\begin{theorem}\cite{KSZ}
    Let $\ell\ge 1$. An infinite aperiodic word is Sturmian if and only if 
$$\mathsf{p}_{\sim_{\ell-\mathsf{ab}},\mathbf{w}}(n)=\left\{
    \begin{array}{ll}
        n+1,&\text{ if }n<2\ell;\\
        2\ell,&\text{ if }n\ge 2\ell.\\
    \end{array}\right.$$
\end{theorem}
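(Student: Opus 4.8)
The plan is to prove the two implications separately. Throughout, recall that a Sturmian word is, by definition, an aperiodic $1$-balanced word over a two-letter alphabet, so in particular $\mathsf{p}_{=,\mathbf{w}}(n)=n+1$, and that equality of factors refines $\sim_{\ell-\mathsf{ab}}$, so $\mathsf{p}_{\sim_{\ell-\mathsf{ab}},\mathbf{w}}(n)\le \mathsf{p}_{=,\mathbf{w}}(n)$ for every infinite word $\mathbf{w}$. The first ingredient, used for both implications, is a lemma valid for an \emph{arbitrary} binary word: two distinct factors of the same length $n\le 2\ell-1$ are never $\ell$-abelian equivalent, so that $\mathsf{p}_{\sim_{\ell-\mathsf{ab}},\mathbf{w}}(n)=\mathsf{p}_{=,\mathbf{w}}(n)$ when $n<2\ell$. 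For $1\le n\le\ell$ this is clear because on words of length at most $\ell$ the relation $\sim_{\ell-\mathsf{ab}}$ is plain equality, the coordinate $|u|_u$ of $\Psi_\ell(u)$ already separating $u$ from any other word of the same length. For $\ell\le n\le 2\ell-2$ one uses the characterization stated just before Definition~\ref{def:kabelian}: $u\sim_{\ell-\mathsf{ab}}v$ forces $u$ and $v$ to share both their prefix and their suffix of length $\ell-1$, and for $n\le 2\ell-2$ those two blocks already cover all $n$ positions, whence $u=v$. The single delicate value is $n=2\ell-1$, where the common prefix and suffix of length $\ell-1$ leave exactly the central position free: $u\ne v$ then means $u=xay$, $v=x\bar{a}y$ with $|x|=|y|=\ell-1$ and $a\ne\bar{a}$; but each of the $\ell$ factors of length $\ell$ of a word of length $2\ell-1$ contains the central position, so passing from $u$ to $v$ deletes one occurrence of the letter $a$ from each of them, dropping the total number of occurrences of $a$ over all length-$\ell$ factors by exactly $\ell>0$ — contradicting $|u|_z=|v|_z$ for all $z\in A^\ell$. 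Combined with $\mathsf{p}_{=,\mathbf{w}}(n)=n+1$, this settles the range $n<2\ell$ for a Sturmian word.

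For the forward implication in the range $n\ge 2\ell$, I would use the realisation of a Sturmian word $\mathbf{w}$ of slope $\alpha$ as a natural coding of the rotation by $\alpha$: the factor of length $n$ read at a phase $t$ depends only on the position of $t$ with respect to the $n$ points $\{-m\alpha\}$, $1\le m\le n$, and — by the characterization above — its $\ell$-abelian class is determined by (i) the cell containing $t$ in the $\ell$-cell partition coding the factors of length $\ell-1$ (this is the prefix of length $\ell-1$), (ii) the cell containing $t+(n-\ell+1)\alpha$ in that same partition (the suffix of length $\ell-1$), and (iii) the vector recording, for each of the $\ell+1$ cells coding the factors of length $\ell$, how many of the points $t,t+\alpha,\dots,t+(n-\ell)\alpha$ fall in it. The crux is a cancellation: whenever $t$ crosses a point $\{-m\alpha\}$ with $\ell\le m\le n-\ell$, all $\ell+1$ boundary points of the length-$\ell$ partition are swept simultaneously, each by a single orbit point, so the vector in (iii) is unchanged. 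Hence the $\ell$-abelian class can only change when $t$ crosses one of the $2\ell$ points $\{-m\alpha\}$ with $m\in\{0,\dots,\ell-1\}\cup\{n-\ell+1,\dots,n\}$ — these being distinct as soon as $n\ge 2\ell$ — which cut the circle into $2\ell$ arcs, giving $\mathsf{p}_{\sim_{\ell-\mathsf{ab}},\mathbf{w}}(n)\le 2\ell$. The reverse inequality comes from checking that these $2\ell$ arcs carry pairwise distinct classes: the $\ell$ points of the first type split them into $\ell$ groups with distinct prefixes of length $\ell-1$, and within each group the suffix of length $\ell-1$ together with the count vector of (iii) separates the remaining arcs.

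For the converse, suppose $\mathbf{w}$ is aperiodic with the stated complexity. Evaluating at $n=1<2\ell$ gives $\mathsf{p}_{\sim_{\ell-\mathsf{ab}},\mathbf{w}}(1)=2$, and since $\sim_{\ell-\mathsf{ab}}$ is equality on single letters, $\mathbf{w}$ is binary. By the lemma above, the hypothesis for $n<2\ell$ reads $\mathsf{p}_{=,\mathbf{w}}(n)=n+1$ for $n\le 2\ell-1$, so $\mathbf{w}$ ``looks Sturmian'' up to length $2\ell-1$; in view of the characterization of Sturmian words recalled at the outset, it remains to prove $\mathsf{p}_{=,\mathbf{w}}(n)=n+1$ for \emph{all} $n$, equivalently that $\mathbf{w}$ has a single right-special factor of each length, after which aperiodicity concludes. \textbf{This last step is the part I expect to be hardest}: the data at lengths $<2\ell$ are insufficient on their own — there exist aperiodic binary words coinciding with a Sturmian word up to any prescribed length and then having strictly larger factor complexity — so one must exploit the \emph{constancy} $\mathsf{p}_{\sim_{\ell-\mathsf{ab}},\mathbf{w}}(n)=2\ell$ for \emph{every} $n\ge 2\ell$. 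I would argue by contradiction: if $n_0$ is the first length with $\mathsf{p}_{=,\mathbf{w}}(n_0)>n_0+1$ (necessarily $n_0\ge 2\ell$), then running the rotation bookkeeping of the forward direction in reverse — or, equivalently, analysing the Rauzy graphs of the sliding-block recoding of $\mathbf{w}$ over its factors of length $\ell$ — produces strictly more than $2\ell$ distinct $\ell$-abelian classes at every sufficiently large length, contradicting the hypothesis. When $\ell=1$ this reduces to the Coven--Hedlund argument for the abelian complexity recalled above.
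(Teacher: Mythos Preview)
The paper is a survey and does not contain a proof of this theorem: the result is stated with a citation to \cite{KSZ} and the text moves on immediately afterwards. There is therefore no proof in the paper to compare your proposal against.

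On the merits of the proposal itself: your treatment of the range $n<2\ell$ is correct, and your rotation argument for the forward implication at $n\ge 2\ell$ is sound and is essentially the approach of the original \cite{KSZ} proof. The cancellation you call ``the crux'' is exactly right: when $t$ crosses $\{-m\alpha\}$ with $\ell\le m\le n-\ell$, each of the $\ell+1$ boundary points of the length-$\ell$ partition is swept by exactly one of the orbit points $t+(m-\ell)\alpha,\ldots,t+m\alpha$, so every cell simultaneously loses one point across one boundary and gains one across another, and the count vector is preserved. Your justification that the $2\ell$ remaining arcs carry pairwise distinct $\ell$-abelian classes is brisk but can be completed without difficulty.

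The converse, as you yourself flag, is not proved in your proposal. Your sketch --- ``running the rotation bookkeeping of the forward direction in reverse'' or ``analysing the Rauzy graphs of the sliding-block recoding'' --- is not an argument: a non-Sturmian aperiodic binary word is not a rotation coding, so there is no rotation bookkeeping to reverse, and you do not identify which feature of the Rauzy graph would force $\mathsf{p}_{\sim_{\ell-\mathsf{ab}},\mathbf{w}}$ strictly above $2\ell$. This is a genuine gap, not a routine detail. The converse requires a separate combinatorial argument --- in \cite{KSZ} it passes through balance, using that an aperiodic non-Sturmian binary word fails $1$-balance and then leveraging an unbalanced pair of factors to exhibit more than $2\ell$ distinct $\ell$-abelian classes at a suitable length --- and this does not fall out of the forward direction.
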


About the fluctuations of $\mathsf{p}_{\sim_{\ell-\mathsf{ab}},\mathbf{w}}$, see the papers \cite{CaSa,KSZ2}. The $2$-abelian complexity of the Thue-Morse word is shown to be $2$-regular in \cite{Parreau} and, independently, in \cite{Gre}.
\medskip

Many results on avoidance are available. In \cite{Rao15}, Rao provides morphic words avoiding $\ell$-abelian powers: an infinite word over a $2$-letter alphabet avoiding $2$-abelian cubes and an infinite word over a $3$-letter alphabet avoiding $3$-abelian squares. The paper also deals with bounds on enumeration results in that context of avoidance. About other avoidance results, also see \cite{Huo1,Huo2,Huo3}

\begin{remark}\label{rem:var}
    A variant of the notion of repetition is considered in
    \cite{Huo4}, a word is a {\em strongly $\ell$-abelian $n$th power}, if it
    is $\ell$-abelian equivalent to a `classical' $n$th power. As an example, the word $aabb$ is not an abelian square because $aa\not\sim_{\mathsf{ab}}bb$ but it is a strongly abelian square because $aabb\sim_{\mathsf{ab}}(ab)(ab)$.
\end{remark}

\begin{remark}\label{rem:pm1}[$\ell$-abelian pattern matching]
    Pattern matching has many applications, here we concentrate on `approximate' pattern matching problems (that can be considered with respect to a given equivalence relation). In \cite{patmat}, making use of suffix arrays, the following problems are positively answered.
\begin{itemize}
         \item Given $\ell\ge 1$ and two words $u,v$ of length $n$,
           decide, in polynomial time with respect to $n$ and $\ell$,
           whether or not $u\sim_{\ell-\mathsf{ab}} v$.
       \item Given $\ell\ge 1$ and two words $w,x$, find, in polynomial
         time, all occurrences of factors of $w$ which are
         $\ell$-abelian equivalent to $x$.
          \item Given two $u,v$ of length $n$, find the largest $\ell$
            such that $u\sim_{\ell-\mathsf{ab}} v$.
    \end{itemize}
\end{remark}


\section{Binomial coefficients}\label{sec:a4}
The notion of a binomial coefficient of words is classical in combinatorics on words. See, for instance, Sakarovitch and Simon's chapter in \cite{Lot}. Let $w,x\in A^*$. The integer denoted by $$\binom{w}{x}$$ 
counts the number of times $x$ appears as a (scattered) subword\footnote{This is the reason why we make a distinction between factors made of consecutive letters and subwords. Be aware that in the literature these two terms are sometimes used with the same meaning.} of $w$, i.e., $x$ occurs as a subsequence of $w$. Otherwise stated, we count the number of increasing maps $\varphi:\{1,\ldots,|x|\}\to\{1,\ldots,|w|\}$ such that 
$$\varphi(1)<\cdots<\varphi(|x|)\quad \text{ and }\quad w_{\varphi(1)}\cdots w_{\varphi(|x|)}=x.$$
As an example, we have $\binom{aabbab}{ab}=7$. It generalizes the usual binomial coefficients of integers because, over a $1$-letter alphabet,
$$\binom{a^m}{a^n}=\binom{m}{n},\quad m,n\in\mathbb{N}.$$
These coefficients can easily be computed from the relations 
$$\binom{w}{\varepsilon}=1,\quad\quad \binom{w}{x}=0,\ \text{ if }|w|<|x|$$
and
$$\forall u,v\in A^*, a,b\in A,\quad \binom{ua}{vb}=\binom{u}{vb}+\delta_{a,b}\binom{u}{v}.$$
\begin{remark}
We have an observation similar to Remark~\ref{rem:rational}.  Let $u=u_1\cdots u_n$. 
    In terms of rational series (we again refer to \cite{BerstelReutenauer}), since the characteristic series of $A^*$ is rational, we deduce that the formal series in $\mathbb{N}\langle\langle A\rangle\rangle$ 
$$\underline{A}^*u_1\underline{A}^* u_2\underline{A}^*\cdots \underline{A}^*u_n\underline{A}^*=\sum_{w\in A^*}\binom{w}{u}\, w$$
is rational. 
\end{remark}
It is not difficult to prove the following result.
\begin{prop}\label{pro:bincon}
    Let $s,t,w$ be three words of $A^*$. Then we have
    $$\binom{sw}{t}=\sum_{uv=t} \binom{s}{u}\binom{w}{v}.$$
\end{prop}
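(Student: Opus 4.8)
The plan is to prove the identity
$$\binom{sw}{t}=\sum_{uv=t}\binom{s}{u}\binom{w}{v}$$
by classifying the occurrences of $t$ as a scattered subword of the concatenation $sw$ according to how they split between the prefix $s$ and the suffix $w$. Recall that $\binom{sw}{t}$ counts increasing maps $\varphi:\{1,\ldots,|t|\}\to\{1,\ldots,|sw|\}$ with $(sw)_{\varphi(1)}\cdots(sw)_{\varphi(|t|)}=t$. Write $t=t_1\cdots t_m$ and $n=|s|$, so that positions $1,\ldots,n$ lie in $s$ and positions $n+1,\ldots,n+|w|$ lie in $w$. For each such $\varphi$, there is a unique index $j\in\{0,1,\ldots,m\}$ such that $\varphi(1),\ldots,\varphi(j)\le n$ and $\varphi(j+1),\ldots,\varphi(m)> n$ (with the convention that $j=0$ means all of $\varphi$ lands in $w$, and $j=m$ means all of it lands in $s$); this is forced by $\varphi$ being strictly increasing.

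First I would fix the splitting index $j$ and set $u=t_1\cdots t_j$ and $v=t_{j+1}\cdots t_m$, so $uv=t$. The occurrences of $t$ in $sw$ with splitting index exactly $j$ are then in bijection with pairs consisting of an occurrence of $u$ as a scattered subword of $s$ (the data $\varphi(1)<\cdots<\varphi(j)\le n$ with the right letters) together with an occurrence of $v$ as a scattered subword of $w$ (the data $\varphi(j+1)<\cdots<\varphi(m)$, shifted down by $n$, with the right letters). The independence of these two choices is the key point: any choice of positions in $s$ and any choice of positions in $w$ can be concatenated, since every position used in $s$ is automatically less than every position used in $w$. Hence the number of occurrences with splitting index $j$ is exactly $\binom{s}{u}\binom{w}{v}$.

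Then I would sum over $j$ from $0$ to $m$. As $j$ ranges over $\{0,\ldots,m\}$, the pair $(u,v)=(t_1\cdots t_j,\ t_{j+1}\cdots t_m)$ ranges exactly once over all factorizations $uv=t$ of $t$ into a prefix and a suffix; conversely every such factorization arises from a unique $j$. Summing the count $\binom{s}{u}\binom{w}{v}$ over these gives $\sum_{uv=t}\binom{s}{u}\binom{w}{v}$, which equals $\binom{sw}{t}$ since every occurrence of $t$ in $sw$ has a unique splitting index and is thereby counted exactly once. This establishes the identity. I expect no serious obstacle here; the only point requiring care is the bookkeeping around the degenerate cases $j=0$ and $j=m$ (where one of $u$, $v$ is the empty word and the corresponding binomial coefficient is $1$ by the convention $\binom{x}{\varepsilon}=1$), and making explicit that the decomposition of $\varphi$ at the boundary between $s$ and $w$ is well-defined and unique. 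As an alternative one could instead induct on $|w|$ using the recurrence $\binom{ua}{vb}=\binom{u}{vb}+\delta_{a,b}\binom{u}{v}$ stated above, peeling off the last letter of $w$ and matching terms against $\sum_{uv=t}$, but the direct combinatorial bijection is cleaner and more transparent.
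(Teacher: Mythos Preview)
Your proof is correct: the bijection between occurrences of $t$ as a subword of $sw$ and pairs (occurrence of a prefix $u$ of $t$ in $s$, occurrence of the complementary suffix $v$ in $w$), indexed by the unique splitting point $j$, is exactly the right idea, and your treatment of the boundary cases $j=0$ and $j=m$ via $\binom{x}{\varepsilon}=1$ is fine. The paper itself does not supply a proof of this proposition---it merely remarks that ``it is not difficult to prove''---so there is nothing to compare against; your direct combinatorial argument is the standard one and is entirely adequate.
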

Let us mention the so-called {\em Cauchy inequality}. Several proofs of this result exist, see \cite{Salomaa2003}.
\begin{theorem}
For all words $w,x,y,z\in A^*$, we have
    $$\binom{w}{y}\binom{w}{xyz}\le \binom{w}{xy}\binom{w}{yz}.$$
\end{theorem}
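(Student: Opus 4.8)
The plan is to prove the Cauchy inequality
$$\binom{w}{y}\binom{w}{xyz}\le \binom{w}{xy}\binom{w}{yz}$$
by constructing an explicit injection between suitable sets of occurrences. First I would reinterpret each binomial coefficient as the cardinality of a set of increasing maps, as in the definition recalled above: $\binom{w}{y}$ counts embeddings of $y$ into $w$, and $\binom{w}{xyz}$ counts embeddings of the concatenation $xyz$. The left-hand side is thus the number of pairs $(\varphi,\psi)$ where $\varphi$ embeds $y$ into $w$ and $\psi$ embeds $xyz$ into $w$. The right-hand side counts pairs $(\alpha,\beta)$ where $\alpha$ embeds $xy$ into $w$ and $\beta$ embeds $yz$ into $w$. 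The goal is to produce an injection from the first set of pairs into the second.

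The key idea is a "swap" argument localized on the copies of the middle block $y$. Given a pair $(\varphi,\psi)$, the map $\psi$, restricted to the positions corresponding to the factor $y$ inside $xyz$, gives a second embedding $\varphi'$ of $y$ into $w$; together with $\varphi$ we have two embeddings of $y$. I would then compare $\varphi$ and $\varphi'$ and keep, say, the pointwise larger/earlier one in an appropriate sense — more precisely, one forms $\alpha$ by taking the $x$-part of $\psi$ followed by the "min" of the two $y$-embeddings, and $\beta$ by taking the "max" of the two $y$-embeddings followed by the $z$-part of $\psi$; here "min" and "max" are taken coordinatewise on the sequences of positions, which is legitimate because the coordinatewise min and max of two increasing sequences hitting the letters of $y$ are themselves increasing sequences hitting the letters of $y$ (this uses that both realize the same word $y$). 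One checks $\alpha$ is a valid embedding of $xy$ and $\beta$ a valid embedding of $yz$: the only thing to verify is that the $x$-positions all precede the chosen $y$-positions and the chosen $y$-positions all precede the $z$-positions, which follows from the fact that in $\psi$ the $x$-block precedes the $y$-block precedes the $z$-block, together with the min/max not moving things outside the interval spanned by the original two copies.

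The main obstacle, and the heart of the proof, is establishing injectivity of the map $(\varphi,\psi)\mapsto(\alpha,\beta)$. From $(\alpha,\beta)$ one recovers the $x$-part (it is the prefix of $\alpha$) and the $z$-part (the suffix of $\beta$), so it remains to recover the unordered pair $\{\varphi,\varphi'\}$ of $y$-embeddings from its coordinatewise min and max — and then to recover which one was $\varphi$. The pair $\{\varphi,\varphi'\}$ is generally \emph{not} determined by its coordinatewise min and max, so the naive construction is not injective; the standard fix is to process the positions from left to right and resolve the "braiding" by a canonical rule (at each coordinate where the two sequences cross, assign the branches consistently), which is exactly the combinatorial lemma underlying Cauchy-type inequalities. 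I would carry this out by induction on $|w|$ using the recurrence $\binom{ua}{vb}=\binom{u}{vb}+\delta_{a,b}\binom{u}{v}$ from the excerpt, peeling off the last letter $a$ of $w$ and splitting the four binomials on the two sides into cases according to whether $a$ matches the last letter of $y$, of $z$, etc.; the inequality for $w=ua$ then reduces after expansion to the inductive hypothesis for $u$ plus a few instances of the trivial monotonicity $\binom{u}{p}\le\binom{ua}{p}$ and Proposition~\ref{pro:bincon}. I expect the bookkeeping of cases (at least four, by whether $a$ equals the last letter of $x$, $y$, $z$ when these blocks are nonempty) to be the tedious part, but each case is a short algebraic manipulation once the recurrence is applied, so the induction is the cleanest route and I would present that rather than the injection, relegating the bijective interpretation to a remark.
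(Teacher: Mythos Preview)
The paper does not actually prove this theorem; it only states it and points to \cite{Salomaa2003} for proofs. So there is no in-paper argument to compare against. Your induction on $|w|$ is a standard and correct route. One point worth making explicit in the write-up: after expanding via the recurrence, the cross terms force you to apply the inductive hypothesis for $u$ with \emph{different} triples, not just the original $(x,y,z)$. For instance, if the last letter $a$ of $w=ua$ equals the last letter of $y=y'a$ but not the last letter of $z$, you are left with the term $\binom{u}{y'}\binom{u}{xyz}$, which must be bounded by $\binom{u}{xy'}\binom{u}{yz}$; this is the Cauchy inequality for $u$ with the triple $(x,\,y',\,az)$. Once this is observed, every case is indeed a short manipulation, and neither the monotonicity $\binom{u}{p}\le\binom{ua}{p}$ nor Proposition~\ref{pro:bincon} is really needed.

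Your bijective sketch, however, has a flaw beyond the injectivity issue you already flag. With the assignment you give ($\alpha=x$-part followed by the coordinatewise $\min$, and $\beta=\max$ followed by the $z$-part), the map is not even well defined: the $x$-positions of $\psi$ are guaranteed to precede $\varphi'$ but not the independent embedding $\varphi$, so $\min(\varphi,\varphi')(1)$ can lie to the left of the last $x$-position. The assignment that does produce valid embeddings is the opposite one ($\max$ for $\alpha$, $\min$ for $\beta$), because $\max\ge\varphi'$ pointwise sits to the right of the $x$-block and $\min\le\varphi'$ sits to the left of the $z$-block. But even with that correction the map fails to be injective: for $w=aabb$, $y=ab$, $x=z=\varepsilon$, the ordered pairs $\bigl((1,4),(2,3)\bigr)$ and $\bigl((1,3),(2,4)\bigr)$ give the same coordinatewise min $(1,3)$ and max $(2,4)$. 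A correct injection needs a more delicate swapping rule than coordinatewise min/max. Since you intend to present the induction and relegate the bijection to a remark, the overall proposal is sound---just drop or substantially rework the bijective part.
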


A general question is to `reconstruct' a word from some of its binomial coefficients: What numbers $\binom{w}{u}$ suffice to
determine the word $w$ uniquely? See, for instance, \cite{Salomaa2005}. Sch\"utzenberger and Simon proved that two words of length $n$ with the same subwords of length up to $\lfloor n/2\rfloor+1$ are identical. In \cite{krasikov}, it is shown that any word of length $n$ is uniquely determined by all its subwords of length $k$, if $k\ge \lfloor 16\sqrt{n}/7\rfloor+5$.  The authors relate this problem to well-known vertex reconstruction problems in graph theory and trace the origin of the problem back to \cite{kala}. For algorithmic considerations, see, for instance, \cite{Dress}.
\medskip

Similarly to $\ell$-abelian equivalence, these binomial coefficients allows us to define an independent refinement of abelian equivalence.
\begin{definition}
  Let $k\ge 1$. Two words $u,v$ are {\em $k$-binomially equivalent}, and we write $u\sim_{k-\mathsf{bin}} v$, if and only if
$$\binom{u}{x}=\binom{v}{x}\quad \forall x\in A^{\le k}.$$
In particular, since $\binom{w}{a}=|w|_a$, for $a\in A$, if $k=1$, then we have the usual equivalence relation $\sim_\mathsf{ab}$. The fact that $\sim_{k-\mathsf{bin}}$ is a congruence is a consequence of Proposition~\ref{pro:bincon}. Similarly to \eqref{eq:psil},  assuming that $A=\{1<\cdots <k\}$ is ordered, one could introduce the map
$$\Psi_k'(u):=\left(\binom{u}{1},\ldots,\binom{u}{k},\binom{u}{11},\ldots,\binom{u}{kk},\ldots,\binom{u}{1^\ell},\ldots,\binom{u}{k^\ell}\right)$$
and $u,v$ are $k$-binomially equivalent if and only if $\Psi_k'(u)=\Psi_k'(v)$. In \cite{salo2010}, the $2$-binomial equivalence was called {\em binary equivalence}.
\end{definition}

As observed in \cite{dudik}, if $|u|\ge k\ge |x|$ , then 
$$\binom{|u|-|x|}{k-|x|} \binom{u}{x} = \sum_{t\in A^k} \binom{u}{t}\binom{t}{x}.$$
Indeed, on the right hand side, a fixed occurrence of the subword $x$ in $u$ is counted as many times as it appears in any bigger subword. Thus, if the positions of the letters of $x$ are fixed, we can build a bigger subword made of $k$ symbols and containing that particular occurrence of $x$ by selecting $k-|x|$  positions amongst the $|u|-|x|$ remaining ones in $u$. Consequently, we deduce the following result.
\begin{lemma}
If $u,v$ are words of length at least $k$, then $u\sim_{k-\mathsf{bin}} v$, if and only if $\binom{u}{t}=\binom{v}{t}$ for all words $t$ of length $k$.
\end{lemma}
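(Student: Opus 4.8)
The plan is to prove the nontrivial implication; the other direction is immediate, since if $u\sim_{k-\mathsf{bin}}v$, then by definition $\binom{u}{x}=\binom{v}{x}$ for all $x\in A^{\le k}$, in particular for all $x$ of length exactly $k$. So assume $|u|,|v|\ge k$ and $\binom{u}{t}=\binom{v}{t}$ for every $t\in A^k$; I want to deduce $\binom{u}{x}=\binom{v}{x}$ for every $x$ with $|x|\le k$. The only ingredient is the identity recalled just above the statement (from \cite{dudik}): for every word $w$ with $|w|\ge k$ and every word $x$ with $|x|\le k$,
$$\binom{|w|-|x|}{k-|x|}\binom{w}{x}=\sum_{t\in A^k}\binom{w}{t}\binom{t}{x}.$$

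First I would extract the fact that $|u|=|v|$. Applying the identity with $x=\varepsilon$ and using $\binom{w}{\varepsilon}=\binom{t}{\varepsilon}=1$, it reads $\binom{|w|}{k}=\sum_{t\in A^k}\binom{w}{t}$. Taking $w=u$ and $w=v$ and comparing, the right-hand sides coincide by hypothesis, hence $\binom{|u|}{k}=\binom{|v|}{k}$; since $n\mapsto\binom{n}{k}$ is strictly increasing on the integers $n\ge k$, we get $|u|=|v|=:n$.

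Now fix $x$ with $|x|\le k$ and apply the identity to $w=u$ and to $w=v$. On the left we obtain $\binom{n-|x|}{k-|x|}\binom{u}{x}$ and $\binom{n-|x|}{k-|x|}\binom{v}{x}$ (using $|u|=|v|=n$), and on the right $\sum_{t\in A^k}\binom{u}{t}\binom{t}{x}$ and $\sum_{t\in A^k}\binom{v}{t}\binom{t}{x}$, which agree term by term by hypothesis. Hence $\binom{n-|x|}{k-|x|}\bigl(\binom{u}{x}-\binom{v}{x}\bigr)=0$, and since $n\ge k\ge|x|$ forces $n-|x|\ge k-|x|\ge 0$, the coefficient $\binom{n-|x|}{k-|x|}$ is a positive integer; dividing by it gives $\binom{u}{x}=\binom{v}{x}$.

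There is no real difficulty in this argument beyond having the identity of \cite{dudik} at hand; the only step that genuinely requires $|u|,|v|\ge k$ (beyond making the identity applicable) is the preliminary reduction to the case $|u|=|v|$, without which one would only get $\binom{|u|-|x|}{k-|x|}\binom{u}{x}=\binom{|v|-|x|}{k-|x|}\binom{v}{x}$, which does not by itself yield $\binom{u}{x}=\binom{v}{x}$.
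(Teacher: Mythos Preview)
Your argument is correct and is exactly the approach the paper intends: the lemma is stated right after the identity from \cite{dudik} and is deduced from it, and you have simply spelled out the two steps (first recovering $|u|=|v|$ via $x=\varepsilon$, then dividing by the nonzero factor $\binom{n-|x|}{k-|x|}$) that the paper leaves implicit.
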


\begin{example}\label{exa:kbin}
    The four words $ababbba$, $abbabab$, $baabbab$ and
    $babaabb$ are $2$-binomially equivalent. For any $w$ amongst these words,
    we have the following coefficients
    $$\binom{w}{aa}=3,\ \binom{w}{ab}=7 ,\ \binom{w}{ba}=5 ,\ \binom{w}{bb}=6.$$
But one can check that they are not $3$-binomially equivalent, as an example,
$$\binom{ababbba}{aab}=3 \text{ but }\binom{abbabab}{aab}=4$$
indeed, for this last binomial coefficient, $aab$ appears as subwords
$w_1w_4w_5$, $w_1w_4w_7$, $w_1w_6w_7$ and $w_4w_6w_7$. 
The $k$-abelian equivalence and the $k$-binomial equivalence relations are incomparable.  Considering
again the first two words, we find $|ababbba|_{ab}=2$ and
$|abbabab|_{ab}=3$, showing that these two words are not $2$-abelian
equivalent. Conversely, the words $abbaba$ and $ababba$ are
$2$-abelian equivalent but are not $2$-binomially equivalent:
$$\binom{abbaba}{ab}=4\text{ but }\binom{ababba}{ab}=5.$$
\end{example}

\begin{remark}
    Since The relation $\sim_{(k+1)-\mathsf{bin}}$ is a refinement of $\sim_{k-\mathsf{bin}}$, we have a lattice of relations over $A^*$ as depicted in Figure~\ref{fig:lattice}. The coarsest relation is abelian equivalence and the finest relation is equality. The relations $\sim_{\psi_{w_0\cdots w_k}}$ will be introduced in Definition~\ref{def:newpsi}.
\begin{figure}[h!tb]
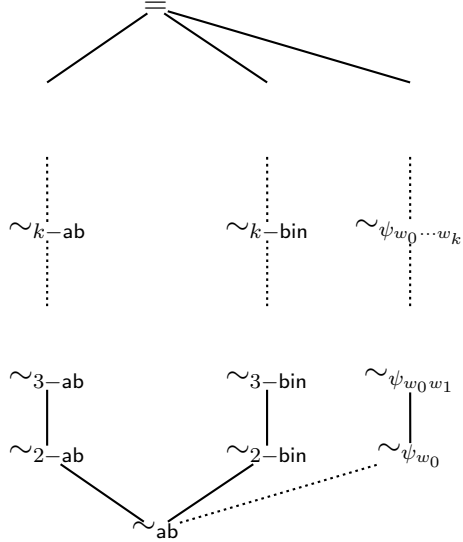

    \centering
\begin{psmatrix}[mnode=r,colsep=0.6,rowsep=0.5]
&[name=1] $=$ & & \\
[name=1a] && [name=1b]& [name=q4] \\
[name=5a] && [name=5b]& [name=q3] \\
[name=4a] $\sim_{k-\mathsf{ab}}$& & [name=4b] $\sim_{k-\mathsf{bin}}$ & [name=q2] $\sim_{\psi_{w_0\cdots w_k}}$\\
[name=6a] && [name=6b] & [name=q1]\\
[name=3a] $\sim_{3-\mathsf{ab}}$& & [name=3b] $\sim_{3-\mathsf{bin}}$ & [name=p2] $\sim_{\psi_{w_0w_1}}$\\
[name=2a] $\sim_{2-\mathsf{ab}}$& & [name=2b] $\sim_{2-\mathsf{bin}}$ & [name=p1] $\sim_{\psi_{w_0}}$ \\
& [name=ab] $\sim_{\mathsf{ab}}$ & & \\
\ncline{1}{1a}\ncline{1}{1b}\ncline{1}{q4}
\ncline[linestyle=dashed, dash=1pt 1.5pt]{4a}{5a}\ncline[linestyle=dashed, dash=1pt 1.5pt]{4b}{5b}

\ncline[linestyle=dashed, dash=1pt 1.5pt]{4a}{6a}\ncline[linestyle=dashed, dash=1pt 1.5pt]{4b}{6b}
\ncline{ab}{2a}\ncline{ab}{2b}
\ncline{3a}{2a}\ncline{3b}{2b}
\ncline{p1}{p2}
\ncline[linestyle=dashed, dash=1pt 1.5pt]{q1}{q2}\ncline[linestyle=dashed, dash=1pt 1.5pt]{q3}{q2}
\ncline[linestyle=dashed, dash=1pt 1.5pt]{ab}{p1}
\end{psmatrix}
\caption{A lattice of congruences, the finest is $=$, the coarsest is $\sim_{\mathsf{ab}}$.}
    \label{fig:lattice}
\end{figure}
\end{remark}

In the literature, one also finds the notion of {\em $k$-spectrum} of a word $u$ which is the (formal) polynomial (we refer to \cite{BerstelReutenauer} for definitions) in $\mathbb{N}\langle A^*\rangle$ of degree $k$ $$\mathsf{Spec}_{u,k}:=\sum_{w\in A^{\le k}} \binom{u}{w}\, w.$$
The $2$-spectrum of the word $u=abbab$ is
$$\mathsf{Spec}_{u,2}=1\varepsilon+2a+3b+aa+4ab+2ba+3bb.$$
If we replace $a$ with $0$ and $b$ with $1$ and if every word is preceded by a leading $1$, every word $w$ over $\{a,b\}$ corresponds to a unique integer (there is no leading $0$) written in base-$2$, $\val_2(1w)$, so this spectrum can also be represented as a univariate polynomial were the word $w$ is replaced with $X^{\val_2(1w)}$. With the same word $u$, we have
\begin{equation}
    \label{eq:poly}
    1+2X^2+3X^3+X^4+4X^5+2X^6+3X^7.
\end{equation}
The $3$-spectrum of the same word $u$ is 
$$\mathsf{Spec}_{u,3}=\mathsf{Spec}_{u,2}+aab+2aba+3abb+2bab+bba+bbb.$$
Note that, just like $\Psi_k'$, the $k$-spectrum grows exponentially with $k$, it contains
$(\# A^{k+1}-1)/(\# A-1)$ (possibly zero) coefficients.  Let us
quote Salomaa: ``{\em a notion often mentioned but not much investigated
in the literature, \cite{SS1,SS2,SS3,Salomaa2003}, is that of a $t$-spectrum.}''. In particular, in terms of `reconstruction' what is the relation between $n$ and $k$ such that if two words $u,v$ of length $n$ have the same $k$-spectrum, then $u=v$?

\begin{remark}
    Two words are $k$-binomially equivalent if and only if they have the same $k$-spectrum.
\end{remark}

\begin{properties}
About avoidance of $k$-binomial repetitions, see \cite{RaoRS}: $2$-binomial
squares (resp. cubes) are avoidable over a $3$-letter (resp. $2$-letter) alphabet. The sizes of the respective alphabets are optimal.
\end{properties}

\begin{theorem}\cite[Thm.~7]{RS}
    If $\mathbf{x}$ is a Sturmian word, then $\mathsf{p}_{\sim_{2-\mathsf{bin},\mathbf{x}}}(n)=n+1$ for all $n\ge 0$. In particular, we also have  $\mathsf{p}_{\sim_{k-\mathsf{bin},\mathbf{x}}}(n)=n+1$ for all $n\ge 0$ and $k\ge 2$.
\end{theorem}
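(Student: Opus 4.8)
The plan is to show that a Sturmian word $\mathbf{x}$ has $\mathsf{p}_{\sim_{2-\mathsf{bin}},\mathbf{x}}(n) = n+1$ for every $n$, and then to observe that this forces the same value for every finer relation $\sim_{k-\mathsf{bin}}$ with $k \geq 2$, since $n+1 = \mathsf{p}_{=,\mathbf{x}}(n) \geq \mathsf{p}_{\sim_{k-\mathsf{bin}},\mathbf{x}}(n) \geq \mathsf{p}_{\sim_{2-\mathsf{bin}},\mathbf{x}}(n) = n+1$, the first inequality being the definition of Sturmian and the middle one coming from the refinement chain $\sim_{k-\mathsf{bin}} \;\Rightarrow\; \sim_{2-\mathsf{bin}}$ exhibited in Figure~\ref{fig:lattice}. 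So the whole content is the case $k=2$, where the upper bound $\mathsf{p}_{\sim_{2-\mathsf{bin}},\mathbf{x}}(n) \leq n+1$ is automatic and only the lower bound $\mathsf{p}_{\sim_{2-\mathsf{bin}},\mathbf{x}}(n) \geq n+1$ must be proved: that is, no two distinct factors of $\mathbf{x}$ of the same length $n$ can be $2$-binomially equivalent.

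First I would reduce to comparing factors that differ in exactly one position, or more precisely reduce to the structure of factors of a Sturmian word. Recall that a Sturmian word is $1$-balanced, so two factors $u,v$ of the same length satisfy $|u|_0 = |v|_0$ or $||u|_0 - |v|_0| = 1$; the factors with a fixed Parikh vector within a length are exactly the classes counted by $\mathsf{p}_{\sim_{\mathsf{ab}},\mathbf{x}}$, which for aperiodic Sturmian words equals $2$ for each $n\geq 1$ (Coven--Hedlund, quoted in the excerpt). Thus of the $n+1$ factors of length $n$, they split into (at most) two abelian classes, and within a single abelian class any two of them have the same $|u|_a$ for all letters $a$; to separate them $2$-binomially I only need to separate them using $\binom{\cdot}{ab}$ and $\binom{\cdot}{ba}$ (over the binary alphabet $\{a,b\}$). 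Since $\binom{u}{ab} + \binom{u}{ba} = |u|_a |u|_b$ is determined by the abelian class, it suffices to show the single statistic $\binom{u}{ab}$ takes distinct values on distinct factors inside one abelian class.

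The heart of the argument, then, is: for a Sturmian word, within each abelian class of $\fac_\mathbf{x}(n)$, the map $u \mapsto \binom{u}{ab}$ is injective. I would prove this via the standard "bispecial / Rauzy graph" picture of Sturmian factors, or equivalently via the continued-fraction/cutting-sequence description: the factors of length $n$ in a fixed abelian class can be linearly ordered (say $u^{(0)}, u^{(1)}, \dots$) so that consecutive ones differ by transposing an adjacent $ab \leftrightarrow ba$, hence $\binom{u^{(i+1)}}{ab} = \binom{u^{(i)}}{ab} \pm 1$; combined with the fact (Sturmian balance, or a direct count on the geometric realisation as a coding of rotation) that the total spread of $\binom{\cdot}{ab}$ over the class has range exactly equal to the number of factors minus one, monotonicity follows and injectivity is immediate. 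Alternatively, one can compute $\binom{u}{ab}$ for a factor $u = x_i \cdots x_{i+n-1}$ directly as a sum over the positions $j$ with $x_j = a$ of the number of $b$'s to its right, express this through the Beatty-sequence formula $x_j = \lfloor (j+1)\alpha \rfloor - \lfloor j\alpha\rfloor$, and read off that distinct starting positions $i$ within a fixed "staircase level" give distinct values.

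The step I expect to be the main obstacle is exactly this injectivity of $u \mapsto \binom{u}{ab}$ on an abelian class — proving the monotonicity/range matching cleanly, since it is where the specific arithmetic of Sturmian words (balance, the shape of Rauzy graphs, and the exact count of factors in each abelian class) has to be brought in. Everything before it is either definitional (the refinement chain, the reduction to a single statistic over a binary alphabet) or a citation (Coven--Hedlund, and $\mathsf{p}_{=,\mathbf{x}}(n)=n+1$). One should also double-check the short lengths $n < 2$ (for $n=0$ there is one empty factor, for $n=1$ there are two factors $a,b$ which are already abelian-inequivalent), so the $n+1$ count holds uniformly and no separate base case is needed.
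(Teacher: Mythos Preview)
The survey does not contain a proof of this theorem; it is stated with a citation to \cite{RS} and no argument is given. There is therefore no in-paper proof to compare your proposal against.

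Evaluating your proposal on its own: the reductions are correct. The sandwich
\[
n+1=\mathsf{p}_{=,\mathbf{x}}(n)\ge \mathsf{p}_{\sim_{k-\mathsf{bin}},\mathbf{x}}(n)\ge \mathsf{p}_{\sim_{2-\mathsf{bin}},\mathbf{x}}(n)
\]
legitimately reduces the general $k$ to $k=2$, and over a binary alphabet the identities $\binom{u}{aa}=\binom{|u|_a}{2}$, $\binom{u}{bb}=\binom{|u|_b}{2}$ and $\binom{u}{ab}+\binom{u}{ba}=|u|_a|u|_b$ correctly reduce the question to showing that $u\mapsto\binom{u}{ab}$ is injective on each abelian class of $\fac_\mathbf{x}(n)$. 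The handling of factors in distinct abelian classes via Coven--Hedlund is also fine.

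The gap is precisely the step you flag as the obstacle: you do not actually prove the injectivity. Your first mechanism (an ordering of the abelian class so that consecutive factors differ by a single adjacent swap $ab\leftrightarrow ba$, whence $\binom{\cdot}{ab}$ moves by $\pm 1$, together with a range count to force monotonicity) is plausible and holds in small Fibonacci examples, but the existence of such an ordering is itself a nontrivial structural fact about Sturmian factor sets that you have not established, and the ``range equals class size minus one'' claim you appeal to is essentially a restatement of the injectivity you want. The Beatty-sequence alternative is only gestured at. So what you have is a sound reduction plus an honest identification of the missing lemma, not a proof; to complete it you would need either a clean combinatorial argument for the transposition-chain structure of Sturmian abelian classes, or a direct computation of $\binom{u}{ab}$ from the rotation description that visibly separates the factors.
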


\begin{properties}\cite[Thm.~17]{RS}
    If an infinite recurrent word has bounded $2$-binomial complexity, then the frequency of each symbol exists and is rational.
\end{properties}

\begin{lemma}
Let $k\ge 1$. If $x\sim_{k-\mathsf{bin}}y$, then 
$pxqyr\sim_{(k+1)-\mathsf{bin}}pyqxr$.    
\end{lemma}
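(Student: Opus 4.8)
The plan is to reduce the statement to the bilinearity of binomial coefficients expressed in Proposition~\ref{pro:bincon}. Note first that the weaker conclusion $pxqyr\sim_{k-\mathsf{bin}}pyqxr$ is immediate: since $\sim_{k-\mathsf{bin}}$ is a congruence, $pxqyr\sim_{k-\mathsf{bin}}pyqyr\sim_{k-\mathsf{bin}}pyqxr$. So the whole content lies in gaining one extra level, and that gain will come from the interaction between the two swapped occurrences.

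First I would iterate Proposition~\ref{pro:bincon} to obtain, for all words $s_1,\dots,s_m,t$,
$$\binom{s_1\cdots s_m}{t}=\sum_{t_1\cdots t_m=t}\binom{s_1}{t_1}\cdots\binom{s_m}{t_m},$$
the sum running over all factorizations of $t$ into $m$ (possibly empty) blocks; this is a routine induction on $m$. Fixing a word $w$ with $|w|\le k+1$ and applying this identity with $m=5$ to $pxqyr$ and to $pyqxr$, subtraction yields
$$\binom{pxqyr}{w}-\binom{pyqxr}{w}=\sum_{w_1w_2w_3w_4w_5=w}\binom{p}{w_1}\binom{q}{w_3}\binom{r}{w_5}\left(\binom{x}{w_2}\binom{y}{w_4}-\binom{y}{w_2}\binom{x}{w_4}\right),$$
so it suffices to see that the total contribution of the summands is zero.

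The key point is the length constraint $|w_2|+|w_4|\le|w|\le k+1$. If $|w_2|\le k$ and $|w_4|\le k$, then $x\sim_{k-\mathsf{bin}}y$ forces $\binom{x}{w_2}=\binom{y}{w_2}$ and $\binom{x}{w_4}=\binom{y}{w_4}$, so the bracketed factor vanishes. The only remaining factorizations have $|w_2|=k+1$ or $|w_4|=k+1$, and never both; in either case $|w|=k+1$ forces every other block — in particular the other middle block — to be empty. The factorization with $w_2=w$ (all others empty) then contributes $\binom{x}{w}-\binom{y}{w}$, while the one with $w_4=w$ contributes $\binom{y}{w}-\binom{x}{w}$, and these cancel. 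Hence the difference is $0$ for every $w$ of length at most $k+1$, i.e.\ $pxqyr\sim_{(k+1)-\mathsf{bin}}pyqxr$.

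There is no real obstacle here; the only thing to handle carefully is the accounting of the two exceptional factorizations, and the clean cancellation there is exactly the spot where one uses that a middle block of length exceeding $k$ leaves no room for anything else, so that exactly a single pair of swapped blocks is seen.
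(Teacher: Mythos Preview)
Your proof is correct and follows essentially the same approach as the paper's: both arguments decompose a scattered subword of length at most $k+1$ according to which of the five blocks $p,x,q,y,r$ its letters fall into, use the hypothesis $x\sim_{k-\mathsf{bin}}y$ on the pieces of length at most $k$, and observe that the two remaining ``whole-block'' contributions (everything in the $x$-slot versus everything in the $y$-slot) cancel by symmetry. The paper's version is much terser and leaves the bookkeeping implicit, whereas you carry it out explicitly via the iterated form of Proposition~\ref{pro:bincon}.
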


\begin{proof}
    If we have to count the number of occurrences of a subword $z$ of
    length at most $k+1$, either this subword occurs completely inside
    one of the factors $p,q,r,x,y$, or it is obtained from several
    shorter subwords occurring in at least two of these factors. To
    get the conclusion, observe that, by assumption, $x,y$ share
    the exactly the same subwords of length at most $k$.
\end{proof}

But it is not clear that a form of converse for this result exists
(see Proposition~\ref{pro:salo} where we have a characterization of
equivalent words in terms of this kind of transformations but only for
a $2$-letter alphabet). Over a $3$-letter alphabet:
$2100221\sim_{2-\mathsf{bin}} 0221102$ but $2100221$ cannot be
factorized into $pxqyr$ with $x\sim_{\mathsf{ab}} y$ and $x\neq y$.

\begin{remark}\label{rem:pm2}[$k$-binomial pattern matching]
In their very nice paper \cite{Ftest}, Freydenberger {\em et al.} answer positively to the following questions (similar to Remark~\ref{rem:pm1}):
       \begin{itemize}
         \item Given $k\ge 1$ and two words $u,v$ of length $n$,
           decide, in polynomial time with respect to $n$ and $k$,
           whether or not $u\sim_{k-\mathsf{bin}}v$.
       \item Given $k\ge 1$ and two words $w,x$, find, in polynomial
         time, all occurrences of factors of $w$ which are
         $k$-binomially equivalent to $x$.
          \item Given two $u,v$ of length $n$, find the largest $k$
            such that $u\sim_{k-\mathsf{bin}}v$.
    \end{itemize}
One answer is given by building a non deterministic automaton accepting a language with multiplicities (one counts the number of accepting paths for a given input word) associated with a word $w$ and an integer $k$. This automaton accepts exactly the subwords of $w$ of length at most $k$ and the number of accepting paths of a subword $x$ is precisely $\binom{w}{x}$. The number of states of this automaton is proportional to $|w|\cdot k$. There exist polynomial time procedures to test the equivalence of two such automata \cite{PP1,PP2,PP3} that were initially considered by Sch\"utzenberger for the minimization of weighted automata. Another clever answer is a randomized one based on the evaluation of a polynomial, similar to \eqref{eq:poly}, over a sufficiently large finite field equivalent to the $k$-spectrum (considering evaluation avoids the problem of considering the polynomial as an exponentially growing list of coefficients). Ideas are similar to those found in primality testing algorithms. 
\end{remark}

\begin{definition}
Other related relations exist. The {\em Simon congruence} $\sim_{S}$ is defined as follows. We have $u\sim_S v$ if and only if the series $\sum_{w\in A^*}\binom{u}{w}w$ and $\sum_{w\in A^*}\binom{u}{w}w$ have the same support, i.e., they have the same non-zero binomial coefficients. This congruence has applications to piecewise testable\footnote{A regular language is {\em piecewise testable} if it is a finite Boolean combination of languages of the form $A^*a_1A^*\cdots A^*a_tA^*$ where the $a_i$'s are letters in $A$.} languages. About Q.1.1 and counting the number of classes for $\sim_S$, see \cite{KKS}.
\end{definition}

 \begin{remark}
     Let us mention an extra notion studied by Salomaa in
     \cite{salo2010}. Let $u=u_1\cdots u_n$ be a finite word. The {\em
       sum of the position indices} for the letter $a\in A$ is defined
     by $$S_a(u):=\sum_{i=1}^{|u|} i\, \delta_{u_i,a}.$$
For instance, $S_b(abacbcaba)=2+5+8=15$. Similarly to binomial coefficients, this type of quantity provides information about the positions of occurrences of letters in a word.
 \end{remark}


\section{Parikh matrices}\label{sec:a3}
Let $k\ge 2$ be an integer. Also related to Theorem~\ref{the:parikh} and binomial coefficients, one can extend the abelianization map $\Psi$ as follows. Let $\mathbb{N}^{\ell\times\ell}$ be the monoid of $\ell\times\ell$ matrices equipped with the multiplication of matrices. Let $A_k:=\{a_1,\ldots,a_k\}$ be an ordered finite alphabet. The {\em Parikh matrix mapping} 
$$\psi_k:A^*\to \mathbb{N}^{(k+1)\times(k+1)}$$ is the morphism of monoids defined by the condition: if $\psi_k(a_q) = (m_{i,j})_{1\le i,j\le k+1}$, then for each $i\in\{1,\ldots,k + 1\}$, $$m_{i,i} = 1,\quad m_{q,q+1}=1,$$ all other elements of the matrix $\psi_k(a_q)$ being $0$. There are many papers dealing with Parikh matrices, we only refer to a few of them \cite{SS2,SS3,Salomaa2003,salo,serb}.

\begin{definition}
Two words over $A_k$ are {\em $M$-equivalent}, or {\em matrix equivalent}, if they have the same Parikh matrix. Again, this relation is clearly a congruence because $\psi_k$ is a morphism. If the equivalence class of a word $w$ is reduced to a single element, then $w$ is said to be {\em $M$-unambiguous}.
\end{definition}

  Consider $A=\{a,b\}$ and $a<b$. We have
$$\psi_2(a)=
\begin{pmatrix}
    1&1&0\\0&1&0\\0&0&1\\
\end{pmatrix},\ 
\psi_2(b)=
\begin{pmatrix}
    1&0&0\\0&1&1\\0&0&1\\
\end{pmatrix}\text{ and }\psi_2(abbab)=
\begin{pmatrix}
    1&2&4\\
    0&1&3\\
    0&0&1\\
\end{pmatrix}.$$ The next proposition can be easily deduced from
elementary properties of binomial coefficients of words and matrix
computations. It shows that Parikh matrices for an alphabet of
cardinality $k$ encode $k(k+1)/2$ of the binomial coefficients of a
word $w$ for subwords of length at most $k$. With the above example, the word $abbab$ contains $2$ $a$'s, $3$ $b$'s and $4$ occurrences of the subword $ab$. 
    \begin{theorem}\cite{mat}\label{the:mat}
 Let $w$ be a finite word over $A_k$ and $\psi_k(w) = (m_{i,j})_{1\le i,j\le k+1}$. Then $$m_{i,j+1}=\binom{w}{a_i\cdots a_j}$$ for all $1\le i\le j\le k$.
    \end{theorem}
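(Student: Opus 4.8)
The plan is to prove the identity $m_{i,j+1}=\binom{w}{a_i\cdots a_j}$ by induction on the length of $w$, exploiting the fact that $\psi_k$ is a morphism of monoids and that the building blocks $\psi_k(a_q)$ are extremely sparse. First I would record the base case: for $w=\varepsilon$ we have $\psi_k(\varepsilon)=I_{k+1}$, and indeed $\binom{\varepsilon}{a_i\cdots a_j}=0$ whenever $i\le j$ (the subword $a_i\cdots a_j$ is nonempty), matching $m_{i,j+1}=0$ for $j+1>i$, i.e.\ $j\ge i$. For a single letter $w=a_q$, the only nonzero off-diagonal entry is $m_{q,q+1}=1$, and correspondingly $\binom{a_q}{a_i\cdots a_j}$ is $1$ exactly when $a_i\cdots a_j=a_q$, i.e.\ $i=j=q$, and $0$ otherwise; so the base case holds.

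For the inductive step, write $w=ua$ with $a=a_s$ for some $s\in\{1,\ldots,k\}$, and set $\psi_k(u)=(p_{i,j})$, $\psi_k(a_s)=(e_{i,j})$, so that $(m_{i,j})=(p_{i,j})(e_{i,j})$ by the morphism property. The matrix $\psi_k(a_s)$ equals $I_{k+1}+E_{s,s+1}$, where $E_{s,s+1}$ is the matrix unit. Hence $m_{i,j+1}=p_{i,j+1}+p_{i,s}\,[s+1=j+1]=p_{i,j+1}+\delta_{s,j}\,p_{i,s}$. By the induction hypothesis $p_{i,j+1}=\binom{u}{a_i\cdots a_j}$ and $p_{i,s}=p_{i,j}=\binom{u}{a_i\cdots a_{j-1}}$ when $s=j$ (with the convention that for $j=i$ the subword $a_i\cdots a_{j-1}$ is empty and $p_{i,i}=1=\binom{u}{\varepsilon}$, which is why the diagonal entries being $1$ is exactly the right initialization). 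On the word side, the recurrence $\binom{ua_s}{a_i\cdots a_j}=\binom{u}{a_i\cdots a_j}+\delta_{a_s,a_j}\binom{u}{a_i\cdots a_{j-1}}=\binom{u}{a_i\cdots a_j}+\delta_{s,j}\binom{u}{a_i\cdots a_{j-1}}$ is precisely the basic identity $\binom{ub}{vb'}=\binom{u}{vb'}+\delta_{b,b'}\binom{u}{v}$ recalled in the excerpt (using that $a_s=a_j$ iff $s=j$ since the alphabet letters are distinct). Comparing the two expressions term by term yields $m_{i,j+1}=\binom{ua_s}{a_i\cdots a_j}$, completing the induction.

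The one point that needs a little care—and which I'd flag as the main (minor) obstacle—is the bookkeeping of degenerate indices: making sure the argument is uniform across the cases $i<j$, $i=j$, and the boundary where the "shorter" subword $a_i\cdots a_{j-1}$ degenerates to $\varepsilon$. This is handled cleanly by interpreting $a_i\cdots a_j$ as the word with letters indexed from $i$ to $j$ (empty if $j<i$), noting $\binom{u}{\varepsilon}=1$, and observing that the diagonal entries $m_{i,i}=1$ of every Parikh matrix encode exactly these empty-subword counts; once that convention is fixed, the matrix identity $\psi_k(a_s)=I+E_{s,s+1}$ and the binomial recurrence line up with no exceptional cases. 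I would also remark, as the excerpt does, that one can alternatively see this by directly expanding $\psi_k(w)=\prod_{t=1}^{|w|}\psi_k(w_t)$: the $(i,j+1)$ entry of such a product of matrices of the form $I+E_{s_t,s_t+1}$ counts increasing chains $i\le s_{t_1}<s_{t_1}+1=s_{t_2}<\cdots$, i.e.\ the ways to pick positions $t_1<\cdots<t_{j-i+1}$ in $w$ with $w_{t_r}=a_{i+r-1}$, which is the definition of $\binom{w}{a_i\cdots a_j}$; but the inductive proof above is the shortest route.
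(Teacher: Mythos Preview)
Your induction on $|w|$ is correct and cleanly executed; the only delicate point, which you identify and handle properly, is the boundary case $i=j$ where $p_{i,j}=1$ must be read as $\binom{u}{\varepsilon}$. Note that the paper does not actually supply a proof of this theorem: it merely remarks that the statement ``can be easily deduced from elementary properties of binomial coefficients of words and matrix computations'' and cites the original source. Your argument---combining the morphism property $\psi_k(ua_s)=\psi_k(u)(I+E_{s,s+1})$ with the recurrence $\binom{ua}{vb}=\binom{u}{vb}+\delta_{a,b}\binom{u}{v}$ recalled earlier in the paper---is precisely the natural way to make that remark rigorous, so there is nothing to compare against.
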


{\em Generalized Parikh mappings} $\psi_u$, for all words $u\in A^*$ can be defined as follows. Let $u=u_1\cdots u_\ell$. If $\psi_u(a) = (m_{i,j})_{1\le i,j\le \ell+1}$, then for each $i\in\{1,\ldots,\ell + 1\}$, $m_{i,i} = 1$, and for each $i\in\{1,\ldots,\ell\}$, $$m_{i,i+1}=\delta_{a,u_i},$$ all other elements of the matrix $\psi_u(a)$ being $0$.
\begin{remark}
    We get back to the `classical' Parikh matrices over $A_k$ with $u=a_1a_2\cdots a_k$.
\end{remark}
Theorem \ref{the:mat} has the following natural generalization.
    \begin{theorem}\cite{serb}
        Let $u=u_1\cdots u_\ell$ and $w$ a word. Let $\psi_u(w) = (m_{i,j})_{1\le i,j\le \ell+1}$. Then, for all $1\le i\le j\le \ell$, $$m_{i,j+1}=\binom{w}{u_i\cdots u_j}.$$ 

In particular, the first row of $\psi_u(w)$ contains the coefficients corresponding to the prefixes of $u$: $\binom{w}{\varepsilon}$, $\binom{w}{u_1}$, $\binom{w}{u_1u_2}$, \ldots, $\binom{w}{u_1\cdots u_{\ell-1}}$, $\binom{w}u$. Similarly, the last column of  $\psi_u(w)$ contains the coefficients corresponding to the suffixes: $\binom{w}{u}$, $\binom{w}{u_2\cdots u_\ell}$,  \ldots, $\binom{w}{u_1}$, $\binom{w}{\varepsilon}$.
    \end{theorem}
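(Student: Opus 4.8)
The plan is to reduce the statement to the special case already recorded in Theorem~\ref{the:mat} (the case $u = a_1a_2\cdots a_k$) by setting up a direct bookkeeping argument on the matrix entries. First I would observe that, since $\psi_u$ is a morphism of monoids, it suffices to prove the formula for $w$ a single letter, and then argue that the claimed identity $m_{i,j+1}=\binom{w}{u_i\cdots u_j}$ is preserved under concatenation. For a single letter $a$, the matrix $\psi_u(a)$ has $1$'s on the diagonal, the entry $\delta_{a,u_i}$ in position $(i,i+1)$, and $0$ elsewhere; on the other hand $\binom{a}{u_i\cdots u_j}$ equals $1$ when $j=i$ and $a=u_i$, equals $1$ when $i>j$ (empty subword), and is $0$ in all remaining cases — so the base case matches entry by entry once we agree that $m_{i,i}$ corresponds to the empty subword $\binom{w}{\varepsilon}=1$.

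Next, for the inductive step I would take $w = w'c$ with $c\in A$ and write $\psi_u(w) = \psi_u(w')\,\psi_u(c)$. Denote $\psi_u(w') = (m'_{p,q})$ and use the structure of $\psi_u(c)$: multiplying on the right by $\psi_u(c)$ adds to column $q$ the entry of column $q-1$ scaled by $\delta_{c,u_{q-1}}$, i.e. $m_{i,j+1} = m'_{i,j+1} + \delta_{c,u_j}\,m'_{i,j}$. Comparing with the standard recurrence for binomial coefficients of words,
$$\binom{w'c}{u_i\cdots u_j} = \binom{w'}{u_i\cdots u_j} + \delta_{c,u_j}\binom{w'}{u_i\cdots u_{j-1}},$$
the induction hypothesis applied to $w'$ gives exactly the desired equality $m_{i,j+1} = \binom{w}{u_i\cdots u_j}$ for all $1\le i\le j\le \ell$. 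One has to be a little careful with the boundary index $j=i$ (where $u_i\cdots u_{j-1}$ is empty and the coefficient is $1$), but this is covered by the base-case normalization $m'_{i,i}=1$.

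The assertions about the first row and last column are then immediate specializations: setting $i=1$ gives $m_{1,j+1} = \binom{w}{u_1\cdots u_j}$, which runs through all prefixes of $u$ (including $\binom{w}{\varepsilon}$ at $j=0$ and $\binom{w}{u}$ at $j=\ell$), and setting $j=\ell$ gives $m_{i,\ell+1} = \binom{w}{u_i\cdots u_\ell}$, which runs through all suffixes. The main obstacle here is not conceptual but notational: one must keep the index shift between matrix positions $(i,j+1)$ and subword boundaries $(i,j)$ consistent throughout, and handle the degenerate cases ($j=i$, empty subwords, the diagonal) without sign or off-by-one errors; the computation of the right-multiplication rule for $\psi_u(c)$ is the one place where writing out a small example first is worthwhile.
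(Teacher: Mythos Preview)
The paper does not actually supply its own proof of this theorem: it is quoted from \cite{serb} and followed only by an illustrative example, so there is nothing to compare your argument against on the paper's side. That said, your induction on the length of $w$ via the factorization $w=w'c$ and the right-multiplication rule $m_{i,j+1}=m'_{i,j+1}+\delta_{c,u_j}\,m'_{i,j}$, matched against the recurrence $\binom{w'c}{u_i\cdots u_j}=\binom{w'}{u_i\cdots u_j}+\delta_{c,u_j}\binom{w'}{u_i\cdots u_{j-1}}$, is exactly the standard and correct way to establish the result; the boundary bookkeeping you flag (diagonal entries encoding $\binom{w}{\varepsilon}=1$, the $j=i$ case) is handled properly.

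One small cosmetic point: your opening sentence announces a reduction to Theorem~\ref{the:mat}, but you then give a self-contained direct proof that nowhere invokes that special case. Either drop the reference to Theorem~\ref{the:mat} or rephrase to say that the argument parallels (rather than reduces to) its proof. Also, strictly speaking the cleanest base case for the induction is $w=\varepsilon$ (identity matrix, all off-diagonal binomials zero), after which the single-letter computation becomes the first instance of the inductive step; your current phrasing conflates the two slightly, though the mathematics is unaffected.
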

    \begin{example}
Here is an illustration of the latter theorem:
 \begingroup
\renewcommand*{\arraystretch}{1.3}
        $$\psi_{abba}(w)=
    \begin{pmatrix}
        1&\binom{w}{a}&\binom{w}{ab}&\binom{w}{abb}&\binom{w}{abba}\\
        0&1&\binom{w}{b}&\binom{w}{bb}&\binom{w}{bba}\\
        0&0&1&\binom{w}{b}&\binom{w}{ba}\\
        0&0&0&1&\binom{w}{a}\\
        0&0&0&0&1\\
    \end{pmatrix}.$$
\endgroup
    \end{example}
With $\ell$-abelian equivalence and $k$-binomial equivalence, we had two infinite families of refinements. We can also introduce similar refinements, actually uncountably many families of refinements.
    \begin{definition}\label{def:newpsi}
        Let $\mathbf{w}=w_0w_1w_2\cdots $ be an infinite word. Considering the prefixes of $\mathbf{w}$, with this infinite word is associated a sequence of maps
$$(\psi_{w_0\cdots w_j})_{j\ge 0}.$$
We say that two finite words $u,v$ are {\em $(\mathbf{w},j)$-equivalent}, if 
$$\psi_{w_0\cdots w_j}(u)=\psi_{w_0\cdots w_j}(v).$$
This means that $u$ and $v$ have the same binomial coefficients corresponding to the factors occurring in the prefix of length $j+1$ of $\mathbf{w}$.
    \end{definition}
In the above definition, if $\mathbf{w}$ contains every letter of the alphabet, taking $j$ large enough such that every letter of $A_k$ appears in $w_0\cdots w_j$, $(\mathbf{w},j)$-equivalence is a refinement of the abelian equivalence. Note that $u\sim_{\ell-\mathsf{bin}}v$ trivially implies that $u,v$ are $(\mathbf{w},\ell-1)$-equivalent. Also, for every word $\mathbf{w}$, $(\mathbf{w},j+1)$-equivalence is a refinement of $(\mathbf{w},j)$-equivalence (the matrix $\psi_{w_0\cdots w_j}(u)$ is the upper-left corner of $\psi_{w_0\cdots w_jw_{j+1}}(u)$). See Figure~\ref{fig:lattice}.

      \begin{example} Let us illustrate the relations existing between binomial equivalence and $M$-equivalence. Again, these equivalences are, in general, incomparable.
        \begin{itemize}
          \item  The two words $u=abcbabcbabcbab$ and $v=bacabbcabbcbba$ are not $3$-binomially equivalent: $\binom{u}{abb}=34$ and $\binom{v}{abb}=36$ but they share the same Parikh matrix $\psi_3(u)=\psi_3(v)$. This observation only reflects that Parikh matrices encode a fraction of the binomial coefficients. Nevertheless, for a well-chosen generalized Parikh matrix, the two words can of course be distinguished by $\psi_{abb}(u)\neq\psi_{abb}(v)$.

          \item  Erasing the $c$'s in the previous two words, we get two words $u'=abbabbabbab$ and $v'=baabbabbbba$ that are not $3$-binomially equivalent: we again have $\binom{u'}{abb}=34$ and $\binom{v'}{abb}=36$. But they have the same Parikh matrix, i.e., $\psi_2(u)=\psi_2(v)$ (and from the next proposition, we also have that the words are $2$-binomially equivalent). Indeed, $3$-binomial equivalence is a strict refinement of the $2$-binomial equivalence.
          \item  Finally, the two words $u=bccaa$ and $v=cacab$ are not $2$-binomially equivalent: $\binom{u}{ca}=4$ and $\binom{v}{ca}=3$, but they share the same Parikh matrix $\psi_3(u)=\psi_3(v)$.
        \end{itemize}
Also, $\ell$-abelian equivalence and $M$-equivalence are incomparable. Take the same two words as in Example~\ref{exa:kbin}: $abbaba$ and $ababba$ are
$2$-abelian equivalent, but they are not $M$-equivalent.
\end{example}
Over a $2$-letter alphabet, the situation is usually simpler. 

 \begin{prop}
         Over a $2$-letter alphabet, two words are $2$-binomially equivalent if and only if they have the same Parikh matrix, i.e., are $M$-equivalent.
    \end{prop}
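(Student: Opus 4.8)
The plan is to use Theorem~\ref{the:mat} to pin down exactly what data the Parikh matrix carries over the binary alphabet $A=\{a<b\}$, and then to recover the remaining length-$2$ binomial coefficients from elementary counting identities. Concretely, for any $w\in A^*$ Theorem~\ref{the:mat} gives
$$\psi_2(w)=\begin{pmatrix} 1 & |w|_a & \binom{w}{ab}\\ 0 & 1 & |w|_b\\ 0 & 0 & 1\end{pmatrix},$$
so $\psi_2(u)=\psi_2(v)$ is equivalent to the three scalar equalities $|u|_a=|v|_a$, $|u|_b=|v|_b$ and $\binom{u}{ab}=\binom{v}{ab}$. On the other hand $u\sim_{2-\mathsf{bin}}v$ demands in addition that $\binom{u}{aa}=\binom{v}{aa}$, $\binom{u}{ba}=\binom{v}{ba}$ and $\binom{u}{bb}=\binom{v}{bb}$. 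Thus the implication $u\sim_{2-\mathsf{bin}}v\Rightarrow \psi_2(u)=\psi_2(v)$ is immediate, and the whole content is the converse.

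For the converse I would record three identities valid for every $w\in A^*$: first $\binom{w}{aa}=\binom{|w|_a}{2}$ and $\binom{w}{bb}=\binom{|w|_b}{2}$, since a scattered occurrence of $aa$ (resp.\ $bb$) is nothing but an unordered pair of positions of $w$ carrying the letter $a$ (resp.\ $b$); and second the complementarity relation
$$\binom{w}{ab}+\binom{w}{ba}=|w|_a\cdot|w|_b,$$
because any pair consisting of one $a$-position and one $b$-position of $w$ contributes to exactly one of the two counts, according to the order in which the two positions occur. Now assume $\psi_2(u)=\psi_2(v)$. From $|u|_a=|v|_a$ and $|u|_b=|v|_b$ the first two identities give $\binom{u}{aa}=\binom{v}{aa}$ and $\binom{u}{bb}=\binom{v}{bb}$ at once; from the complementarity relation together with $\binom{u}{ab}=\binom{v}{ab}$ we get $\binom{u}{ba}=|u|_a|u|_b-\binom{u}{ab}=|v|_a|v|_b-\binom{v}{ab}=\binom{v}{ba}$. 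Hence $u$ and $v$ share all binomial coefficients $\binom{\cdot}{x}$ with $x\in A^{\le 2}$, i.e.\ $u\sim_{2-\mathsf{bin}}v$.

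There is essentially no obstacle in this argument; the only point deserving attention is \emph{why} the statement is restricted to a two-letter alphabet. The key is that over $\{a,b\}$ the single coefficient $\binom{w}{ba}$ missing from the Parikh matrix is forced by the complementarity identity once $\binom{w}{ab}$ and the letter counts are known. Over a $k$-letter alphabet with $k\ge 3$ this fails: the matrix $\psi_k$ encodes only $k(k+1)/2$ of the relevant coefficients (Theorem~\ref{the:mat}), and coefficients such as $\binom{w}{a_1a_3}$ are not among them, nor are they determined by the recorded data, which is exactly the phenomenon illustrated in Example~\ref{exa:kbin} and in the incomparability examples following Definition~\ref{def:newpsi}.
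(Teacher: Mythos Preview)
Your proof is correct and follows essentially the same approach as the paper's: both directions are handled the same way, with $\binom{w}{aa}=\binom{|w|_a}{2}$ and $\binom{w}{bb}=\binom{|w|_b}{2}$ recovering two of the missing coefficients, and a counting identity recovering $\binom{w}{ba}$. The only cosmetic difference is that for the last step the paper invokes $\sum_{x\in A^2}\binom{w}{x}=\binom{|w|}{2}$ rather than your $\binom{w}{ab}+\binom{w}{ba}=|w|_a\,|w|_b$; since $\binom{|w|_a+|w|_b}{2}-\binom{|w|_a}{2}-\binom{|w|_b}{2}=|w|_a\,|w|_b$, these are two phrasings of the same fact.
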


    \begin{proof}
        One direction is obvious. Let the alphabet be $\{a,b\}$. Assume that $\psi_2(u)=\psi_2(v)$. We have 
$$\binom{u}{aa}=\binom{|u|_a}{2}=\binom{|v|_a}{2}=\binom{v}{aa}.$$
The same holds for the subword $bb$. We only have to check that $\binom{u}{ba}=\binom{v}{ba}$. This follows from the fact that, for all words $w$, 
$$\sum_{x\in A^2}\binom{w}{x}=\binom{|w|}{2}.$$
    \end{proof}

In particular, the next result completely characterizes the equivalence classes for $2$-binomial equivalence over a $2$-letter alphabet.
\begin{theorem}\cite{salo}\label{pro:salo}
     Over a $2$-letter alphabet $A$, two words are $M$-equivalent if and only if one can be obtain from the other by a finite sequence of transformations of the form $xabybaz\to xbayabz$ where $a,b\in A$ and $x,y,z\in A^*$.
\end{theorem}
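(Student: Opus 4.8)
The plan is to reduce to binomial invariants, dispatch the easy implication by a short count, and prove the hard implication by a normal-form argument in an auxiliary ``gap vector'' encoding. First, combining the preceding Proposition with elementary identities for binomial coefficients ($\binom{w}{aa}=\binom{|w|_a}{2}$, $\binom{w}{ab}+\binom{w}{ba}=|w|_a|w|_b$, and their analogues), over $A=\{a,b\}$ two words are $M$-equivalent if and only if they have the same number of $a$'s, the same number of $b$'s, and the same coefficient $\binom{\cdot}{ab}$. I fix this triple of invariants and work with it throughout.

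For the implication ``transformable $\Rightarrow$ $M$-equivalent'', I would check that a single step $w=x\,ab\,y\,ba\,z\to w'=x\,ba\,y\,ab\,z$ preserves the three invariants. The letter counts are visibly unchanged. For $\binom{\cdot}{ab}$, either observe that $w'$ arises from $w$ by performing the adjacent transposition $ab\to ba$ in the left block (which lowers $\binom{\cdot}{ab}$ by exactly $1$, a one-line count) and then $ba\to ab$ in the right block (which raises it by exactly $1$), so the net change is $0$; or compute directly $\binom{ab\,y\,ba}{ab}=\binom{y}{ab}+|y|_a+|y|_b+2=\binom{ba\,y\,ab}{ab}$ and invoke Proposition~\ref{pro:bincon}. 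Hence $w\sim_M w'$. Moreover the rewriting rule is symmetric under exchanging the two letters of $A$, so the reverse of a step is again a legal step; thus ``connected by a finite chain of such transformations'' is an equivalence relation, and it is contained in $\sim_M$.

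The converse is the substantial part, and I would prove it by exhibiting, in each $M$-class, a canonical word reachable from every member by transformations. Encode a word as $w=b^{c_0}a\,b^{c_1}a\cdots a\,b^{c_{n_a}}$ with $n_a=|w|_a$, $c_i\ge0$, $\sum_i c_i=|w|_b$; counting occurrences of $ab$ one position at a time gives $\binom{w}{ab}=\sum_{i=1}^{n_a}i\,c_i$. So words with prescribed $M$-invariants correspond bijectively to nonnegative vectors $(c_0,\dots,c_{n_a})$ with a fixed sum and a fixed weighted sum $\sum_i i\,c_i$. Rewriting the transformations in this encoding, they become chip moves: from $c$ one decreases $c_s$ and $c_{s+m}$ each by $1$ and increases $c_{s-1}$ and $c_{s+m+1}$ each by $1$ (with $m\ge0$, the degenerate case $m=0$ meaning $c_s$ drops by $2$), together with the inverse moves; one verifies that these are exactly the transformations $x\,ab\,y\,ba\,z\to x\,ba\,y\,ab\,z$ read off the gap vectors, and they manifestly preserve $\sum_i c_i$ and $\sum_i i\,c_i$. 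Now use the monovariant $\Phi(c):=\sum_i i^2 c_i$: every forward move changes it by $2(m+1)>0$ and $\Phi$ is bounded above, so iterating forward moves from any vector terminates. Finally, a vector admitting no forward move must have all interior entries $c_1,\dots,c_{n_a-1}$ equal to $0$ except at most one equal to $1$ (whose position, when present, is $\binom{w}{ab}\bmod n_a$), whence $c_{n_a}$ and $c_0$ are forced; such a vector is thus uniquely determined by the two invariants. Therefore every word with given $M$-invariants is carried by forward transformations to one and the same canonical word, and since the transformations are reversible, any two words with the same Parikh matrix are connected; with the previous paragraph this proves the theorem.

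I expect the main obstacle to be precisely this converse: since the transformations preserve $\binom{\cdot}{ab}$ one cannot sort a word freely, so one must locate both a normal form and a monovariant which, together with the invariants, force \emph{uniqueness} of the reachable stable configuration rather than merely termination. The gap-vector picture is what makes this tractable, turning an opaque word rewriting into a transparent chip-moving game in which $\sum_i i^2 c_i$ is the natural spreading potential and the stable states are visibly indexed by the two linear invariants.
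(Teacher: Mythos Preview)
The paper does not supply a proof of this theorem; it is quoted as a result of Salomaa~\cite{salo} and left unproved in the text. So there is no in-paper argument to compare against, and your proposal must be judged on its own merits. It is correct.

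Your approach is a clean confluence argument. Encoding $w=b^{c_0}ab^{c_1}a\cdots ab^{c_{n_a}}$ and recognising the rule $xabybaz\to xbayabz$ as the chip move $(c_{s-1},c_s,c_{s+m},c_{s+m+1})\mapsto(c_{s-1}{+}1,c_s{-}1,c_{s+m}{-}1,c_{s+m+1}{+}1)$ is exactly right, and the potential $\Phi(c)=\sum_i i^2c_i$, which a forward move increases by $2(m+1)$, gives termination. The crucial step---that a terminal vector (no forward move) satisfies $\sum_{i=1}^{n_a-1}c_i\le 1$ and is then \emph{uniquely} determined by $\sum_i c_i=|w|_b$ and $\sum_i i\,c_i=\binom{w}{ab}$---is what upgrades termination to a genuine normal form, and you have it. Since the theorem's rule with $a,b$ swapped is the inverse move, two $M$-equivalent words are linked via their common normal form.

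Two small points you should make explicit in a final write-up: handle the degenerate cases $|w|_a\le 1$ (or $|w|_b=0$) separately, where the invariants already pin down $w$; and note that forward moves preserve nonnegativity of the gap vector, so the terminal configuration you reach is indeed realised by a word.
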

As a consequence of this result, a word over a $2$-letter alphabet is $M$-unambiguous (there is no other word with the same Parikh matrix) if and only if it belongs to $a^*b^*+b^*a^*+a^*ba^*+b^*ab^*+a^*bab^*+b^*aba^*$.
\section{Other relations}\label{sec:8}

To conclude with this survey, let us mention a few other relations that can be encountered in combinatorics on words and formal language theory. Berstel and Boasson initiated the study of {\em partial words} containing a `do not know' symbol $\diamond$ serving as a wild card \cite{berboa}. Two words such as 
$$\begin{array}{lccccc}
&a &\diamond& b& b& \diamond\\ \text{and }&\diamond& a& b& \diamond& \diamond\\
\end{array}$$ are {\em compatible} because one can replace the symbols $\diamond$ in such a way that both words match the word $aabba$ (or $aabbb$). This relation `{\em being compatible}' is reflexive and symmetric (but clearly not transitive). Also see, \cite{BS1',BS2,BS3,fw3,fw2}.  
One can generalize this to a {\em similarity relation} associated with a binary relation over an alphabet, see the survey chapter by Halava, Harju and K\"arki in \cite[Chap.~6]{CANT2}. 
\begin{definition}
Consider a reflexive and symmetric relation $R$ over an alphabet $A$. Two words $u_1\cdots u_n$ and $v_1\cdots v_n$ are {\em $R$-similar}, if $(u_i,v_i)\in R$ for all $i$. We write $u\sim_R v$. Partial words corresponds to the special case where the relation $R$ is defined by $(a,\diamond)\in R$, for all $a\in A$.
\end{definition}

\begin{example}
    Assume that $R=\{(a,b),(b,c),(c,d),(d,a)\}$ and take its symmetric and reflexive closure. As an example, we have the following relations:
$$abcd\sim_R bcbd,\ bcbd\sim_R ccac,\ abcd\not\sim_R ccac.$$
\end{example}

One can also think about relations derived from languages or automata. We just give an example. Let $L$ be a language over $A$. The {\em syntactic congruence} is defined as follows. The {\em context} of a word $u$ is the set of pairs of words $(x,y)$ such that $xuy$ belongs to $L$. Two words are syntactically congruent if they have the same context. For instance, see \cite{BerstelReutenauer}.

\end{document}